\numberwithin{equation}{section}
\newcommand{\bP}{\mathbb{P}}
\newcommand{\bQ}{\mathbb{Q}}
\newcommand{\bL}{\mathbb{L}}
\newcommand{\bE}{\mathbb{E}}
\newcommand{\bR}{\mathbb{R}}
\newcommand{\bN}{\mathbb{N}}
\newcommand{\bT}{\mathbb{T}}
\newtheorem{theorem}{Theorem}[section]
\newtheorem{prop}[theorem]{Proposition}
\newtheorem{assumption}{Assumption}
\theoremstyle{definition}
\newtheorem{definition}[theorem]{Definition}
\theoremstyle{remark}
\newtheorem{remark}[theorem]{Remark}
\newtheorem{example}[theorem]{Example}
\def\eqref#1{(\ref{#1})}
\begin{document}

\title{Additive energy forward curves in a Heath-Jarrow-Morton framework}
%\thanks{JEL {\it Classification:} }
\thanks{%Preliminary version: all comments welcome. 
This work was partially done while M. Piccirilli visited the University of Oslo. F. E. Benth acknowledges financial support from the research project FINEWSTOCH funded by the 
Norwegian Research Council. T. Vargiolu acknowledges financial support from the research project CPDA158845 ``Multidimensional polynomial processes and applications to new challenges in mathematical finance and in energy markets", funded by the University of Padova.}
\date{\today}

\author[F. E. Benth]{Fred Espen Benth} 
\address{University of Oslo, Department of Mathematics, PO Box 1053 Blindern, N-0316 Oslo (Norway)} 
\email{fredb@math.uio.no}
\author[M. Piccirilli]{Marco Piccirilli}
\address{University of Padova, Department of Mathematics, via Trieste 63, Torre Archimede, I-35121 Padova (Italy).}
\email[Corresponding author]{mpicciri@math.unipd.it}
\author[T. Vargiolu]{Tiziano Vargiolu}
\address{University of Padova, Department of Mathematics, via Trieste 63, Torre Archimede, I-35121 Padova (Italy)}
\email{vargiolu@math.unipd.it}

\subjclass[2010]{60G44, 60G51, 91G20, 91B70}

\keywords{Energy markets, mean-reversion, Heath-Jarrow-Morton approach, forwards, martingale property.}

\begin{abstract}
One of the peculiarities of power and gas markets is the delivery mechanism of forward contracts. The seller of a futures contract commits to deliver, say, power, over a certain period, while the classical forward is a financial agreement settled on a maturity date. Our purpose is to design a Heath-Jarrow-Morton framework for an additive, mean-reverting, multicommodity market consisting of forward contracts of any delivery period. Even for relatively simple dynamics, we face the problem of finding a density between a risk-neutral measure $\bQ$, such that the prices of traded assets like forward contracts are true $\bQ$-martingales, and the real world probability $\bP$, under which forward prices are mean-reverting. By assuming that forward prices can be represented as affine functions of a universal source of randomness, we can completely characterize the models which prevent arbitrage opportunities. 
%The Girsanov kernel of $Z$ turns out to be stochastic and unbounded %, both 
%in the diffusion %as in the jump 
%part, while it must be deterministic and bounded in the jump part: thus, 
In this respect, we prove two results on the martingale property of stochastic exponentials. The first allows to validate measure changes made of two components: an Esscher-type density and a Girsanov transform with stochastic and unbounded kernel. 
%, each one with stochastic and unbounded Girsanov kernel. 
The second uses a different approach and works for the case of continuous density. We show how this framework provides an explicit way to describe a variety of models by introducing, in particular, a generalized Lucia-Schwartz model and a cross-commodity cointegrated market.
%In the Brownian case, we prove a preliminary result on the martingality of stochastic exponentials, relying on the asymptotic properties of Gaussian moments. We discuss why the previous technique does not apply to general infinitely divisible distributions, even though the density process is continuous. Finally, we show how a different approach based on a result by \cite{MR3267283} allows us to prove the martingale property for the general Lévy framework.
\end{abstract}

\maketitle

\section{Introduction}

Since their deregulations, which took place in many countries over the last few decades, energy markets are rapidly evolving sectors and are bringing to the attention of practitioners and researchers challenging problems from the modeling perspective. The most active segment is often the forward market and, as such, the most liquid derivative products are forward contracts. %(see \cite[Section 2]{lempa} for a nice description of energy futures mechanism).
%, which can be roughly divided in two main categories; forwards and swaps.
%\footnote{In this paper, we do not make distinctions between forwards and futures, since the results are the same in the case of deterministic interest rate, as we assume here.} 
We do not make distinctions between forwards and futures, since the results are the same in the case of deterministic interest rate, as we assume here. 
Throughout this paper, we adopt the same terminology as \cite{Benth20081116}, and reserve the name \emph{forward} to contracts with delivery at a fixed future time, while agreements to deliver the commodity over a period are called \emph{swaps}. The latter category is important especially in modeling electricity and natural gas markets, where the commodity is exchanged, either physically or financially, over a certain time period (e.g. a day, month or a whole year). A thorough study of mathematical models for these markets, together with a description of their most salient empirical features, can be found, for instance, in \cite{MR2416686,serletis}. 

This paper aims at developing a consistent and tractable framework for a multicommodity energy forward market by applying the Heath-Jarrow-Morton paradigm \cite{hjm}. Intuitively, this consists in describing forward prices as stochastically evolving functions of time and delivery dates, across different markets. To the best of our knowledge, the first work to apply this idea to energy-related commodities is \cite{clewlow} and since then many other works have followed, among others \cite{koeke,MR2241631,weron2006,kieselSB}; for a review of HJM models in power and gas markets and further references see \cite[Chapter 6]{MR2416686}. 
%Also, \cite{MR3271515,MR3254339} have developed an infinite-dimensional approach based on random fields. 
In this paper, we want to design a realistic forward market model with the same philosophy of \cite{Benth20081116}, where it is theoretically possible to trade contracts with any delivery period and no-arbitrage relations must hold among them. 
%In \eqref{13} and \eqref{14} 
We propose a mean-reverting stochastic process (indeed, an Ornstein-Uhlenbeck-type of model parameterized over delivery times) for the forward and swap price dynamics. By formulating the model under the market probability measure $\bP$, we can represent the stylized empirical behavior of observed prices.
% and avoid to arbitrarily specify the market price of risk in a parameterized form.
%, which might lead to internal inconsistencies. 
The no-arbitrage constraints among forwards and swaps are established via explicit relations between the parameters in the respective dynamics. 
%embedded through \eqref{2}. 
As pointed out in \cite{Benth20081116}, specifying a stochastic evolution for the forward curve and then deriving the swap price as the average over the delivery times has the disadvantage of losing desirable distributional features and, in some cases, even the Markov property. This results in non-tractable models, which inherit a complex probabilistic structure. Therefore, in that paper the authors argue in favor of a swap market model, where only the so-called atomic swaps are directly modeled. Though prices of non-atomic swaps can be reconstructed by arbitrage arguments, it is not possible to use all the available information in the market when fitting the model  on real data. These facts motivate us to look for a HJM market model general enough to include both forwards and swaps, still preserving tractability of the resulting stochastic structure.

In this paper we study dynamical models which are additive, meaning that we do not perform a logarithmic transformation of prices (see Section 2 for more details). Recently, additive models have gained popularity, especially for describing power spot prices, e.g. \cite{MR2323278, fanone, gallana,HWK, kieselpara, latini}. This is due to their ability of reproducing rather well the stylized features of electricity prices, including the empirical evidence of negative spot prices, and providing explicit formulas for derivative pricing. Generally the problem, when these models are used for commodities other than power, is that negative prices can occur. Nevertheless, according to one's modeling preferences, it is possible e.g. to resort to subordinators (see \cite{MR2323278}). %However, the probability of observing a negative value can often be computed. For instance in  \cite{latini}, it turns out that, after a proper calibration, this probability is almost zero for a reasonably short observation period.
The multi-commodity setting has not been extensively studied for energy forward markets applications (see, however, the spot-based models by \cite{farkas,jaimungal,paschke}).
%(for a general multidimensional framework applied to fixed income markets see e.g. \cite{hjm_multicurve}). 
With the gradual integration of power markets, say, there is a need for cross-commodity dynamical models which can describe the price evolutions in different power markets simultaneously, such as the Nordic NordPool power and the German EEX market. 
%Another interesting application could be a joint model for describing coal, gas and power prices consistently, as well as wind futures markets, that has recently been introduced on the EEX platform. 
In fact, our multicommodity framework allows us to develop models where commodities have various kinds of dependencies, like correlations among the driving processes or cointegration/price couplings among them (see Section 5). In addition, it opens the way for computing in a more realistic way the prices of multicommodity derivative assets, as dark or spark spreads written either on spot or futures prices and, on the risk management side, implementing consistently risk measures like PaR and VaR on multicommodity portfolios.
%is more sensible in a genuine multidimensional market model.

Our main idea is to express forward prices as affine transformations of a universal source of randomness, the latter being independent of the delivery date (see Assumption \ref{18}). This simplifying assumption allows both to preserve Markovianity and to describe consistently the related swap price processes. Most importantly, this will be the key to proving the existence of equivalent martingale measures $\bQ$, ensuring arbitrage-free models. In fact, the presence of mean-reversion compels us to face non-trivial mathematical hurdles. In this regard, we prove in Theorem \ref{85} the martingale property of stochastic exponentials 
%adopting a powerful technique developed in \cite{MR3267283}. This consists in verifying the uniform integrability of the stopped martingales, via the De la Vallée-Poussin criterion (see \cite[p.19]{meyer}). 
where the Lévy part is of Esscher-type, while the Girsanov kernel of the Brownian component is affine in the state variable and thus, in particular, stochastic. Then, we give a different proof in the case of continuous density and continuous kernel, by applying a weak Novikov-type condition on the series representation of the exponential function (cf. Theorem \ref{7}). The proof of this result relies on the asymptotic properties of Gaussian moments. As we move on to consider more general L\'{e}vy kernels, we see that the same technique can not be applied. This is shown to be related to the fact that various examples of infinitely divisible distributions, except the Gaussian distribution, do not satisfy the needed moments asymptotics.
%When the randomness of forward prices is described by a multidimensional Brownian motion, we are able to fully characterize the dynamics specifications which are arbitrage-free. In other words, even if this subset of HJM models does not include completely general forward dynamics, we know how to characterize it analytically (Propositions \ref{61} and \ref{60}). The proof of the main technical result for this market relies on the asymptotic properties of Gaussian moments. As we move on to consider more general L\'{e}vy processes, we see that the same technique can not be applied and a different argument is needed. For a one-dimensional price process we can actually extend to the L\'evy-case by applying a powerful result by \cite{MR3267283} on the martingale property of stochastic exponentials.

We demonstrate the utility of our theoretical framework by specifying two exemplary models. The first, which is presented in Section 4, is a generalization of the additive two-factor Lucia-Schwartz model \cite{Lucia2002}. We extend it by introducing a mean-reverting arbitrage-free forward dynamics, which is capable to describe a finer volatility term structure. This allows us to account for seasonal effects in price variations, as is typically observed, for instance, in power or gas markets. A calibration procedure and an empirical application to the German power futures market of a version of this model has been carried out in a parallel work by \cite{latini}. Furtherly, we introduce a multidimensional model for a mean-reverting cointegrated forward market that respects the no-arbitrage constraints. In particular, we see how these constraints imply certain conditions on the mean-reversion coefficients of the futures curve dynamics (see Section 5). 

The paper is structured as follows. Section 2 describes our application of the HJM approach to a multicommodity, additive, energy forward market. In Section 3 we study how our main assumption identifies the dynamics of the processes and the set of equivalent martingale measures. Then, we prove the two main results on the martingale property of stochastic exponentials. A generalization of the two-factor Lucia-Schwartz model \cite{Lucia2002} is presented in Section 4. Finally, in Section 5 we develop a cross-commodity model with cointegrated dynamics.
%In this regard, we prove results analogous to the Brownian case, motivating, in particular, the necessity of a different argument for proving the main result of the section. %Finally, in Section 6 we briefly discuss an application of the same idea to a stochastic volatility model.
In Section 6 we make some final remarks. Appendix A contains the proof of Theorem \ref{85}, while the proof of Theorem \ref{7} is presented in Appendix B.
% Also, this approach does not allow to include a description of the spot dynamics.

\section{An HJM-type Market for Energy Forward Contracts}

In this section we introduce the general structure of our forward market by applying the Heath-Jarrow-Morton approach to energy forward curves. As already mentioned, we assume that it is possible to trade contracts for one or more commodities in the form of forwards with instantaneous maturity and swaps with arbitrary delivery period. 
In this way we are able to introduce a sufficiently general framework to be applied to energy-related markets in a consistent way. Both the forward and swap dynamics are described by means of stochastic differential equations and the relations among them follow from natural no-arbitrage conditions. %For each maturity or delivery period, we have an additive model of mean-reverting type. 
We extend the analysis by \cite[Sections 3 and 4]{Benth20081116} to a multidimensional framework. Furthermore, we introduce a (deterministic) volatility modulated pure-jump L\'{e}vy component in the price dynamics.
%, which is the first exhaustive study of this type with application to the Nord Pool electricity market. Part of the results in this section have already been studied in that paper and we refer the reader to it for further details. 
We start from their setting and investigate how to specify a flexible, yet sufficiently tractable, multivariate additive mean-reverting model. 
%In mathematical applications to energy markets it is natural to go beyond Brownian motion and study non-Gaussian models. To mention just an example among many, \cite{bfk} study the normal inverse Gaussian (NIG) distribution for modeling daily returns of Nordic electricity swaps, concluding that it gives a better fit than the Gaussian distribution. Therefore, 
We emphasize that our dynamics are described under the empirical probability measure $\bP$.

Let us first introduce the stochastic basis underlying our framework.
\begin{assumption}\label{stoch_basis}
Let $(\Omega,\mathcal{F},\mathbb{F},\bP)$ be a filtered probability space satisfying the usual hypotheses (see \cite{KaS:91}). Let $W$ be a multidimensional Brownian motion and $\overline N(dt,dy):=N(dt,dy)-dt\, \nu(dy)$ denote the compensated Poisson random measure associated to a centered square-integrable pure-jump L\'{e}vy process
\begin{equation}
J(t)=\int_0^t \int_\bR y\,\overline N(ds,dy)
\end{equation}
where the L\'{e}vy measure $\nu(dy)$ is assumed square-integrable in the sense that $\int_{\bR^k} \|y\|^2\,\nu(dy)<\infty$.\footnote{This assumption can be relaxed to $\int_{\bR^k} (\|y\|^2\wedge1)\,\nu(dy)<\infty$ and we could derive analogous results in this section by defining a non-compensated Lévy process $\widetilde J(t):=\int_0^t \int_{\|y\|<1} y\,\overline N(ds,dy)+\int_0^t \int_{\|y\|\geq1} y\,N(ds,dy)$. However, since we want to ease the mathematical discussion and to focus on the modeling intepretation,
%interpret the jump process as a square-integrable martingale component (see upcoming Equations \eqref{13}, \eqref{14}), 
we assume the stronger condition $\int_{\bR^k} \|y\|^2\,\nu(dy)<\infty$, that in particular implies that $J$ (as defined in \eqref{stoch_basis}) is a square-integrable martingale component of the SDEs in \eqref{13} and \eqref{14}. Let us remark that for some of the results treated in this paper, we will need even stronger assumptions on $\nu$ (cf. Section 3.2), which makes it somehow unprofitable to start here with more general assumptions.} The processes $W$ and $J$ are both of dimension $k$ and independent of each other. 
\end{assumption}

Assume that we are in a market with $n$ commodities. 
Fix a time horizon $\mathbb{T}$ for our economy and define the sets $\mathcal{A}^{\mathbb{T}}_1=\{ (t,T)\in[0,\mathbb{T}]^2: t\leq T\}$ and $\mathcal{A}^{\mathbb{T}}_2=\{ (t,T_1,T_2)\in[0,\mathbb{T}]^3: t\leq T_1< T_2\}$. For each $(t,T)\in \mathcal{A}^{\mathbb{T}}_1$, let $f(t,T)$ denote the $n$-dimensional vector of forward prices at time $t$ for $n$  forward contracts with instantaneous delivery at time $T$. Analogously, for any $(t,T_1,T_2)\in \mathcal{A}^{\mathbb{T}}_2$, let $F(t,T_1,T_2)$ be the $n$-dimensional vector of swap prices at time $t$ of the $n$ corresponding contracts with delivery period $[T_1,T_2]$. 
Let $c:\mathcal{A}^{\mathbb{T}}_1\to\bR^n$, $\lambda:\mathcal{A}^{\mathbb{T}}_1\to\bR^{n\times n}$, $\sigma:\mathcal{A}^{\mathbb{T}}_1\to\bR^{n\times k}$, $\psi:\mathcal{A}^{\mathbb{T}}_1\to\bR^{n\times k}$, $C:\mathcal{A}^{\mathbb{T}}_2\to\bR^n$, $\Lambda:\mathcal{A}^{\mathbb{T}}_2\to\bR^{n\times n}$, $\Sigma:\mathcal{A}^{\mathbb{T}}_2\to\bR^{n\times k}$, $\Psi:\mathcal{A}^{\mathbb{T}}_2\to\bR^{n\times k}$ be deterministic, measurable vector/matrix fields satisfying the following technical assumptions.
% with respect to the corresponding Borel $\sigma$-algebras.
%We then introduce the following vector/matrix fields. 
%vector fields $c:\mathcal{A}^{\mathbb{T}}_1\to\bR^n$, $C:\mathcal{A}^{\mathbb{T}}_2\to\bR^n$ and matrix fields $\lambda:\mathcal{A}^{\mathbb{T}}_1\to\bR^{n\times n}$, $\sigma:\mathcal{A}^{\mathbb{T}}_1\to\bR^{n\times k}$, $\psi:\mathcal{A}^{\mathbb{T}}_1\times\bR\to\bR^{n\times k}$, $\Lambda:\mathcal{A}^{\mathbb{T}}_2\to\bR^{n\times n}$, $\Sigma:\mathcal{A}^{\mathbb{T}}_2\to\bR^{n\times k}$, $\Psi:\mathcal{A}^{\mathbb{T}}_2\times\bR\to\bR^{n\times k}$, satisfying the following technical assumptions.
\begin{assumption}\label{1}
\begin{itemize}
\item For any $T\in[0,\mathbb{T}]$, $t\mapsto c(t,T)$ is integrable on $[0,T]$, i.e. $\int_0^{T} \|c(t,T)\| dt$ is finite.
\item For any $T_1,T_2\in[0,\mathbb{T}]$ such that $T_1< T_2$, $t\mapsto C(t,T_1,T_2)$ is integrable on $[0,T_1]$, i.e. $\int_0^{T_1} \|C(t,T_1,T_2)\| dt$ is finite.
\item For each $T\in[0,\mathbb{T}]$, $t\mapsto\sigma(t,T)$ and $t\mapsto\psi(t,T)$ are square integrable on $[0,T]$, i.e. $\int_0^{T} \|\sigma(t,T)\|^2 dt$ and $\int_0^{T} \|\psi(t,T)\|^2 dt$ are finite. 
\item For any $T_1,T_2\in[0,\mathbb{T}]$ such that $T_1< T_2$, $t\mapsto\Sigma(t,T_1,T_2)$ and $t\mapsto\Psi(t,T_1,T_2)$ are square integrable on $[0,T_1]$, i.e. $\int_0^{T_1} \|\Sigma(t,T_1,T_2)\|^2 dt$ and $\int_0^{T_1} \|\Psi(t,T_1,T_2)\|^2 dt$ are finite. 
\item For each $T\in[0,\mathbb{T}]$, the matrix field $t\mapsto\lambda(t,T)$ is continuous on $[0,T]$.
\item For any $T_1,T_2\in[0,\mathbb{T}]$ such that $T_1< T_2$, the matrix field $t\mapsto\Lambda(t,T_1,T_2)$  is continuous on $[0,T_1]$.
%\item The L\'{e}vy measure satisfies
%$$
%\int_{\bR^k} \|y\|^2 \,\nu(dy)<\infty,
%$$
%so that $J$ is a square integrable martingale.
\end{itemize}
\end{assumption}
We assume the following dynamics:
\begin{align}\label{13}
df(t,T)&=(c(t,T)-\lambda(t,T) f(t,T))\, dt+\sigma(t,T)\,dW(t)+\psi(t,T)\,dJ(t),
 \\
\nonumber
dF(t,T_1,T_2)&=(C(t,T_1,T_2)-\Lambda(t,T_1,T_2) F(t,T_1,T_2))\, dt\\  \label{14} 
&\qquad\qquad\qquad\quad+\Sigma(t,T_1,T_2)\,dW(t)+\Psi(t,T_1,T_2)\,dJ(t).
\end{align}  
%%%%%%%%%%%%%%%%%%%%5
%\int_{\bR^k}y\overline N(dt,dy)
%%%%%%%%%%%%%%%%%%%%%%
%modulated by a time-dependent volatility coefficient $\psi(t,T)$ for the vector of forwards and $\Psi(t,T_1,T_2)$ for the swaps. 
The initial conditions $f(0,T)$ and $F(0,T_1,T_2)$ are deterministic Borel measurable functions in
$T\leq \mathbb T$ and $T_1< T_2\leq \mathbb T$, respectively.

\begin{remark}\label{64}
We would like to have an explicit representation for the solutions of \eqref{13} and \eqref{14}. To simplify the upcoming discussion, consider the $T$-independent continuous version of \eqref{13}: 
\begin{equation}\label{16}
df(t)=-\lambda(t) f(t)\, dt+\sigma(t)\,dW(t),
\end{equation}
where $f(0)$ is the identity matrix, $c\equiv0$ and $\lambda,\sigma$ are assumed sufficiently regular so that there exists a unique solution $f$. 
Let us introduce the following definition.
\begin{definition}[Commutative property]\label{17}
We say that a square matrix-valued continuous function $A(t)$ satisfies the commutative property (CP) if, for any pair of time values $t_1$ and $t_2$, $A(t_1)A(t_2)=A(t_2)A(t_1)$. 
\end{definition} 
If $\lambda(t)$ satisfies (CP), the unique solution $U$ of the matrix ODE
$$
dU(t)=\lambda(t) U(t)\, dt,
$$
can be written as 
\begin{equation*}
U(t)= e^{\int_0^t \lambda(u) du}
\end{equation*}
by introducing the matrix exponential $e^A:=\sum_{k=0}^\infty \frac{A^k}{k!}$ (see e.g. \cite{Blanes}). In particular, since $U(t)$ is invertible and its inverse satisfies
\begin{equation}\label{u_inv}
dU^{-1}(t)=-\lambda(t) U^{-1}(t)\, dt,
\end{equation}
we also have that $\lambda(t)U(t)=U(t)\lambda(t)$.\footnote{This property follows from \eqref{u_inv}, by observing that $0=d(U(t)U^{-1}(t)) = dU(t)U^{-1}(t)+U(t)dU^{-1}(t)=\lambda(t) U(t)U^{-1}(t)-U(t)\lambda(t) U^{-1}(t)=\lambda(t)-U(t)\lambda(t) U^{-1}(t)$.} These results do not hold in general when $\lambda$ does not satisfy (CP), due to the non-commutativity of the matrix product. Now, apply It\^{o}'s Lemma to $U(t)f(t)$:
\begin{align}
d\left(U(t)f(t)\right) &= \lambda(t)\,U(t)\,f(t)\,dt + U(t)\,df(t)\\
&= \left(\lambda(t)\,U(t) - U(t)\,\lambda(t)\right) f(t)\, dt + U(t)\,\sigma(t)\,dW(t)\\
&= U(t)\,\sigma(t)\,dW(t).
\end{align}
By integrating the above equality, we derive that if $\lambda(t)$ satisfies (CP), then $f(t)$ can be written explicitly as 
\begin{equation}\label{15}
f(t)= e^{-\int_0^t \lambda(u) du}\,\int_0^t e^{\int_0^s \lambda(u) du} \sigma(s) dW(s).
\end{equation}
%One might try to derive the solution in the case of a multidimensional $f$ by introducing the matrix exponential of a square matrix $A$, which is defined as  $e^A:=\sum_{k=0}^\infty \frac{A^k}{k!}$. However, if the dimension of $f$ is strictly greater than 1, then 
%In fact, the solution can still be expressed in terms of the Magnus expansion (\cite{MR0067873}), but in a rather cumbersome expression. 
%Consequently, we will assume for certain matrix-valued functions the following conditions, which assure that we can write the solution of equations like \eqref{16} in the exponential form as in \eqref{15}.
\end{remark} 

In view of Remark \ref{64}, we introduce the following assumption.
\begin{assumption}\label{cp}
The matrix-valued functions $\lambda(\cdot,T)$ and $\Lambda(\cdot,T_1,T_2)$ satisfy (CP) for all $T\leq\bT$ and $T_1<T_2\leq\bT$.
\end{assumption}
In particular, let us remark that (CP) is fulfilled by any matrix constant in time and any time-dependent diagonal matrix. Assumption \ref{1} ensures that \eqref{13} and \eqref{14} admit a unique square-integrable solution (cf. also \cite[Appendix A]{Benth20081116}). Moreover, Assumption \ref{cp} (see Remark \ref{64}) guarantees that the solution of \eqref{13} can be expressed explicitly as
\begin{eqnarray}\nonumber
f(t,T) & = & e^{-\int_0^t \lambda(s,T) ds} f(0,T)+\int_0^t e^{-\int_s^t \lambda(u,T) du} c(s,T) ds\\ \label{95}
& & +\int_0^t e^{-\int_s^t \lambda(u,T) du} \sigma(s,T) dW(s)
+\int_0^t e^{-\int_s^t \lambda(u,T) du} \psi(s,T) dJ(s)
\nonumber
\end{eqnarray}
and, analogously, the solution of \eqref{14} can be written as
\begin{eqnarray}\nonumber
F(t,T_1,T_2) & = & e^{-\int_0^t \Lambda(s,T_1,T_2) ds} F(0,T_1,T_2)+\int_0^t e^{-\int_s^t \Lambda(u,T_1,T_2) du} C(s,T_1,T_2) ds \label{95F}\\
&   & +\int_0^t e^{-\int_s^t \Lambda(u,T_1,T_2) du} \Sigma(s,T_1,T_2) dW(s)
+\int_0^t e^{-\int_s^t \Lambda(u,T_1,T_2) du} \Psi(s,T_1,T_2) dJ(s). \nonumber
\end{eqnarray}
%This is easily shown using the multidimensional It\^o Formula (see Remark \ref{64}). %{\bf make reference to multi-dim version} {\color{red}(ok like this?)}. 

These processes are also linked among themselves by no-arbitrage. In fact, 
in this continuous time setting, a limit argument (see, for instance, \cite{bjerksund2010valuation}) leads to the following no-arbitrage condition among each swap and the corresponding family of forwards: for any $t\leq T_1$ and $T_2>T_1>0$ 
\begin{equation}\label{2}
F(t,T_1,T_2)=\int_{T_1}^{T_2} \widehat{w}(T,T_1,T_2) f(t,T)\,dT,
\end{equation}
where $\widehat w:\mathcal{A}^{\mathbb{T}}_2\to\bR^{n\times n}$ is a weight function and represents the time value of money. This can differ from contract to contract according to how settlement takes place (see Equation 4.2 in \cite{Benth20081116}). For example, in the one-dimensional case we could have $\widehat{w}(T,T_1,T_2)=\frac1{T_2-T_1}$ or, by assuming a continuously compounded constant risk-free rate $r$, $\widehat{w}(T,T_1,T_2)=\frac{e^{-r T}}{\int_{T_1}^{T_2} e^{-r T}\,dT}$. Thus, if we generalize in a straightforward way to a multidimensional market, then we could define, for instance,  
$$
\widehat{w}(T,T_1,T_2)=\frac1{T_2-T_1}\, I_n\qquad
\mbox{or}\qquad \widehat{w}(T,T_1,T_2)=\frac{e^{-r T}}{\int_{T_1}^{T_2} e^{-r T}\,dT}\, I_n,
$$
with $I_n$ being the identity matrix of dimension $n$. %, or some combination of the two multiplying factors on the matrix diagonal. 
We may further enrich this setting by interpreting in a wider sense the role of one, or more, components of $f(t,T)$ and, consequently, the role of $\widehat{w}$ as well. For example, the last component of $f(t,T)$ may represent some \emph{forward} stochastic factor with dynamics governed by \eqref{13}, which affects the value of the swap contracts  $F(t,T_1,T_2)$ through \eqref{2}. %This consists in one of the advantages of dealing with a multidimensional model. 
However, we do not want to look into this matter at the present time, leaving the possibility to investigate it to future work. 
In view of its modeling intepretation, we define the weight function in the following way.
\begin{definition}\label{65}
A matrix-valued map $\widehat w:\mathcal{A}^{\mathbb{T}}_2\to\bR^{n\times n}$ is called a weight function if, for any $T_1<T_2$, it is integrable with respect to $T$ on $[T_1,T_2]$  and
$$
\int_{T_1}^{T_2} \widehat{w}(T,T_1,T_2)\,dT=I_n.
$$
%Moreover, for every $T\leq T_1<T_2$, the matrix $\widehat w(T,T_1,T_2)$ commutes with all the square matrices of the same size (which is true, for instance, if it is diagonal).
\end{definition}
%For an appropriate definition of $\widehat w$, \eqref{2} implies all the relations that must hold among overlapping swap contracts (cf. \cite[Lemma 4.2]{Benth20081116}).
%%%%%%%%%%%%%%%%
\iffalse
From the arbitrage condition in \eqref{2}, which involves each swap and the underlying instantaneous forwards, it is possible to derive also the relations that must hold among overlapping swap contracts. In this regard, we recall the following result (Lemma 4.2 in \cite{Benth20081116}), which is stated by the authors in the one-dimensional case for an appropriate definition of $\widehat w$. The key point consists in the fact that \eqref{2} implies all the necessary arbitrage relations among nested contracts.
\begin{prop}
Consider a swap contract with delivery period $[T_1, T_2]$. If $[\tau_i, \tau_{i + 1}]$, for $i=1, ..., m$, are non-overlapping delivery periods of $m$ swaps whose union is $[T_1, T_2]$, then
$$
F(t,T_1,T_2)=\sum_{i=1}^m w_i F(t, \tau_i, \tau_{i + 1}),
$$
where $w_i$ are appropriate weights determined by $\widehat w$ or, equivalently, by settlement rules.
\end{prop}
\fi
%%%%%%%%%%%%%%%%%%%%%%%

In the next proposition we state how the no-arbitrage condition in \eqref{2} determines the coefficients appearing in the dynamics \eqref{13} and \eqref{14}. Let us observe that, in order to do this, we assume that the mean-reversion speeds $\lambda$ and $\Lambda$ are independent of delivery. 
\begin{prop} \label{52}
For all $T\leq\bT$ and $T_1<T_2\leq\bT$, let $f(\cdot,T)$ and $F(\cdot,T_1,T_2)$ be the unique solutions of \eqref{13} and \eqref{14} for given coefficients 
%$c,\lambda,\sigma,\psi,C,\Lambda,\Sigma,\Psi$ 
satisfying Assumption \ref{1}, where, in addition, the mean-reversion coefficients are independent of delivery: 
\begin{equation}\label{20}
\Lambda(t,T_1,T_2)=\lambda(t,T):=\lambda(t).%\quad \mbox{for any } T>0 \mbox{ and } T_1>T_2>0.
\end{equation}
Let us assume that the no-arbitrage relation \eqref{2} holds for a given weight function $\widehat w$ satisfying the following integrability conditions:
%for the weight function $\widehat w$
for all $T_1 < T_2$, %\footnote{If $v$ is a vector, then $\|v\|$ denotes the usual Euclidean norm, whereas if  $A=(a_{ij})^N_{i,j=1}$ is a matrix, we indicate by $\|A\|$ the matrix norm $\|A\|^2=\sum_{i,j} (a_{ij})^2$. Notice that this norm is sub-additive, i.e. if $A$ and $B$ are square matrices, then $\|AB\|\leq\|A\|\,\|B\|$, and consistent with any other Euclidean norm, i.e. $\|Av\|\leq\|A\|\,\|v\|$ for each (not necessarily square) matrix $A$ and vector $v$ for which the product $Av$ makes sense. These facts are implicitly employed in the proof of Theorem \ref{7} (see Appendix).}
\begin{align*}
&\int_0^{T_1} \int_{T_1}^{T_2} \|\widehat{w}(T,T_1,T_2)\,\sigma(t,T)\|^2\,dT\,dt<\infty,\, 
\int_0^{T_1} \int_{T_1}^{T_2} \|\widehat{w}(T,T_1,T_2)\,\psi(t,T)\|^2\,dT\,dt<\infty, \\
&\int_0^{T_1} \int_{T_1}^{T_2} \|\widehat{w}(T,T_1,T_2)\,c(t,T)\|\,dT\,dt<\infty,\, 
\int_{T_1}^{T_2} \|\widehat{w}(T,T_1,T_2)\,f(0,T)\|\,dT<\infty, \\
&\int_0^{T_1} \int_{T_1}^{T_2} \|\widehat{w}(T,T_1,T_2)\,\lambda(t)\,f(t,T)\|^2\,dT\,dt<\infty, \quad\mbox{a.s.}
\end{align*}
If the matrix $\widehat w(T,T_1,T_2)$ commutes with $\lambda(t)$ for every $T\leq T_1<T_2$ and $t<T$, then
\begin{align}%\nonumber
%F(0,T_1,T_2)&=\int_{T_1}^{T_2} \widehat{w}(T,T_1,T_2)f(0,T)\,dT\\
\label{19}
C(t,T_1,T_2)&=\int_{T_1}^{T_2} \widehat w(T,T_1,T_2) c(t,T)\,dT, \\ \label{21}
\Sigma(t,T_1,T_2)&=\int_{T_1}^{T_2} \widehat w(T,T_1,T_2) \sigma(t,T)\,dT,\\
\label{192021}
\Psi(t,T_1,T_2)&=\int_{T_1}^{T_2} \widehat w(T,T_1,T_2) \psi(t,T)\,dT.
\end{align}
\end{prop}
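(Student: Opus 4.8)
The plan is to differentiate the no-arbitrage identity \eqref{2} and match the resulting coefficients with those postulated in the swap dynamics \eqref{14}. Writing $F(t,T_1,T_2)=\int_{T_1}^{T_2}\widehat w(T,T_1,T_2) f(t,T)\,dT$, I would first apply a stochastic Fubini theorem to interchange the Lebesgue integral in $T$ with the stochastic integration in $t$, obtaining
$$
dF(t,T_1,T_2)=\int_{T_1}^{T_2}\widehat w(T,T_1,T_2)\,df(t,T)\,dT.
$$
Substituting the forward dynamics \eqref{13} with $\lambda(t,T)=\lambda(t)$ (cf. \eqref{20}) splits the right-hand side into a drift, a Brownian part and a compensated pure-jump part, each expressed as an integral in $T$ against $\widehat w$.

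The second step handles the mean-reversion term. In the drift I obtain $\int_{T_1}^{T_2}\widehat w(T,T_1,T_2)\,\lambda(t)\, f(t,T)\,dT$; using the commutativity hypothesis $\widehat w(T,T_1,T_2)\,\lambda(t)=\lambda(t)\,\widehat w(T,T_1,T_2)$, I pull $\lambda(t)$ out of the integral and then invoke \eqref{2} once more to rewrite this term exactly as $\lambda(t)\,F(t,T_1,T_2)$. This reproduces the linear mean-reversion structure of \eqref{14} with the same speed $\lambda(t)$, and isolates the remaining finite-variation part as $\int_{T_1}^{T_2}\widehat w(T,T_1,T_2)\, c(t,T)\,dT$. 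Comparing the decomposition so obtained with \eqref{14} and invoking the uniqueness of the semimartingale (It\^o--L\'evy) decomposition, I match the finite-variation, continuous-martingale and purely-discontinuous parts term by term, which yields \eqref{19}, \eqref{21} and \eqref{192021}.

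The main obstacle is the rigorous justification of the stochastic Fubini interchange, and this is exactly what the five integrability hypotheses are designed to provide: the $L^2$ condition on $\widehat w\sigma$ legitimizes the Brownian integral, the $L^2$ condition on $\widehat w\psi$ together with the square-integrability of $\nu$ (which makes $J$ a square-integrable martingale, cf. the footnote to Assumption \ref{stoch_basis}) controls the compensated-jump integral, and the $L^1$ condition on $\widehat w c$ together with the a.s.\ $L^2$ control of $\widehat w\,\lambda\,f$ handles the drift and authorizes moving the $T$-integral through the $\lambda f\,dt$ term. A minor point worth flagging is that \eqref{2} is assumed only for $t\le T_1$, so the identification of coefficients is valid on $[0,T_1]$, precisely the interval on which the swap is traded; the integrability of $\widehat w\,f(0,\cdot)$ guarantees in addition that the initial value $F(0,T_1,T_2)$ appearing implicitly in \eqref{2} at $t=0$ is well defined.
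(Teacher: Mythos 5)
Your proposal is correct and follows essentially the same route as the paper: both integrate the forward dynamics \eqref{13} against $\widehat w$, invoke the stochastic Fubini theorem (justified by the stated integrability conditions), use the commutativity of $\widehat w$ with $\lambda(t)$ together with \eqref{2} to recast the mean-reversion term as $-\lambda(t)F(t,T_1,T_2)$, and then identify \eqref{19}, \eqref{21} and \eqref{192021} by comparing the resulting representation with \eqref{14}. Your appeal to the uniqueness of the semimartingale decomposition is just a slightly more explicit phrasing of the paper's ``uniqueness of representation of solutions,'' and your remarks on the restriction $t\le T_1$ and the finiteness of $F(0,T_1,T_2)$ are consistent with the paper's hypotheses.
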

\begin{proof}
The proof is an application of the stochastic Fubini Theorem (cf. e.g.  \cite[Theorem 64]{MR2020294}) and follows by comparing the coefficients of \eqref{13} to \eqref{14} after integrating against $\widehat w$. From \eqref{13} we can write
\begin{align}\nonumber
\int_{T_1}^{T_2} &\widehat{w}(T,T_1,T_2) f(t,T)\,dT= \int_{T_1}^{T_2} \widehat{w}(T,T_1,T_2)f(0,T)\,dT\\ \nonumber
&+\int_{T_1}^{T_2}\int_0^t \widehat{w}(T,T_1,T_2)  (c(u,T)-\lambda(u) f(u,T))\,du \,dT\\ \nonumber
&+\int_{T_1}^{T_2} \int_0^t\widehat{w}(T,T_1,T_2)  \sigma(u,T)\, dW(u)\,dT
+\int_{T_1}^{T_2} \int_0^t  \widehat{w}(T,T_1,T_2) \psi(u,T)\, dJ(u)\,dT.
\end{align}
After observing that
$$
\int_0^t \widehat{w}(T,T_1,T_2) \lambda(u) f(u,T)\,du =\int_0^t \lambda(u)\widehat{w}(T,T_1,T_2) f(u,T)\,du,
$$
we change the order of integration in the last equation (recalling \eqref{2}) so to obtain
\begin{align}\nonumber
F&(t,T_1,T_2)= F(0,T_1,T_2)+\int_0^t \left(\int_{T_1}^{T_2} \widehat{w}(T,T_1,T_2)  c(u,T)\,dT - \lambda(u) F(u,T_1,T_2)\right)\,du\\ \nonumber
&+ \int_0^t \int_{T_1}^{T_2} \widehat{w}(T,T_1,T_2)  \sigma(u,T)\,dT\, dW(u)
+\int_0^t \int_{T_1}^{T_2} \widehat{w}(T,T_1,T_2) \psi(u,T)\,dT\, dJ(u).
\end{align}
Then, the result follows from the uniqueness of representation of solutions by comparing the coefficients of the last equation to the respective ones in \eqref{14}.
%\begin{align}\nonumber
%F(t,T_1,T_2)= F(0,T_1,T_2)+&\int_0^t \left(C(u,T_1,T_2) -\lambda(u) F(u,T_1,T_2)\right) du\\ \nonumber
%&+ \int_0^t \Sigma(u,T_1,T_2) dW(u)+\int_0^t \Psi(u,T_1,T_2) dJ(u),
%\end{align}
\end{proof}

We finish this section by noticing that, under sufficient regularity assumptions on the evolution of forward prices, we can write down the implied dynamics of the spot price, defined as $S(t)=f(t,t)$.
\begin{prop} In addition to Assumption \ref{1}, let us assume that $c(t,T),\sigma(t,T),\psi(t,T)$ and $f(t,T)$ are differentiable with respect to $T$ for any $t\leq T$, with bounded partial derivatives $c_T(t,T),\sigma_T(t,T),\psi_T(t,T)$ and $f_T(t,T)$.%\footnote{By the derivative of a matrix/vector-valued function we mean the matrix/vector of the derivative of each entry.} 
Then, the spot price follows the dynamics
$$
dS(t)=(c(t,t)+\zeta(t)-\lambda(t) S(t))\, dt +\sigma(t,t)dW(t) +\psi(t,t)dJ(t),
$$
where 
$$
\zeta(t):=f_T(0,t)+\int_0^t (c_T(u,t)-\lambda(u) f_T(u,t))\,du+\int_0^t \sigma_T(u,t)\, dW(u)+\int_0^t \psi_T(u,t)\, dJ(u).
$$
\end{prop}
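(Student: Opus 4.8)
The plan is to view the spot price as the restriction of the forward surface to the diagonal, $S(t)=f(t,t)$, and to compute its dynamics by decoupling the two ways in which $t$ enters: the ``time'' argument, which evolves as an It\^o process, and the ``maturity'' argument, which enters smoothly through the differentiability hypothesis. Throughout I use, consistently with the statement, that $\lambda$ is independent of delivery, $\lambda(t,T)=\lambda(t)$ (cf. \eqref{20}); otherwise an extra term in $\lambda_T$ would surface in $\zeta$. Concretely, for small $h>0$ I would write the increment as
$$
S(t+h)-S(t)=\bigl[f(t+h,t+h)-f(t,t+h)\bigr]+\bigl[f(t,t+h)-f(t,t)\bigr],
$$
splitting it into a \emph{time-direction} term, the increment of $t\mapsto f(t,T)$ over $[t,t+h]$ with the maturity frozen at $T=t+h$, and a \emph{maturity-direction} term, the increment of the smooth map $T\mapsto f(t,T)$. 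Passing to the limit $h\to0$ in each bracket will produce the decomposition $dS(t)=d_tf(t,T)\big|_{T=t}+f_T(t,t)\,dt$.

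For the first bracket the integral form of \eqref{13} gives
$$
f(t+h,t+h)-f(t,t+h)=\int_t^{t+h}\!\bigl(c(u,t+h)-\lambda(u)f(u,t+h)\bigr)du+\int_t^{t+h}\!\sigma(u,t+h)\,dW(u)+\int_t^{t+h}\!\psi(u,t+h)\,dJ(u),
$$
and using continuity in the maturity variable, which follows from the boundedness of the partial derivatives and hence Lipschitz continuity, to replace the frozen argument $t+h$ by $t$ inside the integrands, this contributes exactly $(c(t,t)-\lambda(t)S(t))\,dt+\sigma(t,t)\,dW(t)+\psi(t,t)\,dJ(t)$ in the limit. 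For the second bracket, differentiability in $T$ gives $f(t,t+h)-f(t,t)=\int_t^{t+h}f_T(t,s)\,ds$, which contributes $f_T(t,t)\,dt$.

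It then remains to identify $f_T(t,t)$ with $\zeta(t)$. For each fixed $T$ I would differentiate the integral form of \eqref{13} in $T$, obtaining
$$
f_T(t,T)=f_T(0,T)+\int_0^t\bigl(c_T(u,T)-\lambda(u)f_T(u,T)\bigr)du+\int_0^t\sigma_T(u,T)\,dW(u)+\int_0^t\psi_T(u,T)\,dJ(u),
$$
and setting $T=t$ reproduces precisely the definition of $\zeta(t)$. The main obstacle is the justification of differentiating under the stochastic integral sign, namely that $\partial_T\int_0^t\sigma(u,T)\,dW(u)=\int_0^t\sigma_T(u,T)\,dW(u)$ and likewise for the $dJ$-integral. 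I would handle this by forming the difference quotients $\delta^{-1}(\sigma(u,T+\delta)-\sigma(u,T))$, invoking the mean value theorem together with the boundedness of $\sigma_T$ to dominate them, and then using the It\^o isometry for the $dW$-part and its analogue for the compensated measure $\overline N$ for the $dJ$-part to upgrade the pointwise convergence of the integrands to $L^2(\bP)$-convergence of the stochastic integrals; the deterministic integral is treated by classical differentiation under the integral sign. The same boundedness is what controls the $o(h)$ error terms when the two limits are recombined, turning the formal decomposition into a rigorous integral identity for $S$.
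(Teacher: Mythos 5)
Your argument is correct, but it follows a genuinely different route from the paper. The paper's proof is a direct application of the stochastic Fubini theorem, following Proposition 11.1.1 of Musiela--Rutkowski: one writes each coefficient at maturity $t$ as its diagonal value plus the integral of its $T$-derivative, e.g. $\sigma(u,t)=\sigma(u,u)+\int_u^t\sigma_T(u,s)\,ds$, substitutes into the integral form of \eqref{13} evaluated at $T=t$, and interchanges the order of integration, so that $\zeta$ emerges as the collected inner integrals; differentiability of the stochastic integrals in $T$ is never discussed, since only the $T$-derivatives of the \emph{coefficients} are integrated back. You instead split the diagonal increment $S(t+h)-S(t)$ into a time-direction and a maturity-direction part, obtaining the decomposition $dS(t)=d_tf(t,T)\big|_{T=t}+f_T(t,t)\,dt$, and then identify $f_T(t,t)=\zeta(t)$ by differentiating the integral form of \eqref{13} under the (stochastic) integral signs via difference quotients, domination by the bounded partial derivatives, and the It\^o/compensated-Poisson isometries. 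Both routes work under the stated hypotheses; yours buys the interpretive identity $\zeta(t)=f_T(t,t)$ (the drift correction is exactly the slope of the forward curve at the short end, the Musiela-type term), at the cost of two extra technical layers: the single-increment limit you describe must really be run as a telescoping sum over a partition with mesh tending to zero (the martingale increments are $O(\sqrt h)$, not $o(h)$, so it is the replacement errors such as $\sum_i\int_{t_i}^{t_{i+1}}[\sigma(u,t_{i+1})-\sigma(u,u)]\,dW(u)$ that are controlled, in $L^2$ via the isometry and the Lipschitz bound in $T$), and the differentiation lemma for $f_T$ has to be proved rather than cited. You also correctly flag, as the paper implicitly assumes via \eqref{20}, that $\lambda$ must be delivery-independent for the stated form of $\zeta$. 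The Fubini proof is shorter and pushes all measurability issues into one classical theorem; your proof is more elementary in its ingredients and yields the structural reading of $\zeta$ explicitly.
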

\begin{proof}
The proof is an application of the stochastic Fubini Theorem and follows the same steps of Proposition 11.1.1 in \cite{MR1474500}.
\end{proof}
%%%%%%%%%%%%%%%
\iffalse
Then, for each $t$, $f(t,T)$ is normally distributed with mean 
$$
e^{-\int_0^t \lambda(s,T) ds} f(0,T)+\int_0^t e^{-\int_s^t \lambda(u,T) du} c(s,T) ds
$$ 
and covariance matrix \footnote{For a matrix $A$ we denote by $A'$ the transposed matrix of $A$.}
$$
\int_0^t \left(e^{-\int_s^t \lambda(u,T) du} \sigma(s,T)\sigma(s,T)' e^{-\int_s^t \lambda'(u,T) du}\right) ds.
$$
\fi
%%%%%%%%%%%%%%%%%
Naturally, when starting out with a mean-reverting dynamics for swaps and forwards, the implied spot price dynamics will also become mean-reverting. Indeed, we observe that spot prices follow a multidimensional Lévy
Ornstein-Uhlenbeck process with time-dependent speed of mean-reversion $\lambda(t)$ and volatility
$\sigma(t,t)$. It mean-reverts towards a time-dependent level $\lambda(t)^{-1}(c(t,t)+\zeta(t))$ (whenever $\lambda(t)$ is invertible). This opens for including
seasonal variations into the model, which is very relevant in energy markets.

\section{A Mean-Reverting Model consistent with No-Arbitrage}

Now we describe how to construct, under a suitable assumption, a HJM-type forward market as in the previous section. We state how this naturally leads us to characterize the model, firstly in terms of its dynamical behavior, and secondly in relation to the existence of equivalent martingale measures and, therefore, arbitrage. In this regard, we show two crucial results about the martingale property of stochastic exponentials. The main result (Theorem \ref{85}) allows us 
%prove the martingale property under minimal assumptions and is able 
to validate general measure changes with Esscher-type jump component and stochastic kernel in the diffusive part. Then, a different approach is proposed for the specific case of stochastic exponential of Brownian integrals of continuous kernels, so that, in particular, the density process is continuous. Finally, we discuss why this technique does not apply to the case of general Lévy kernels and how this fact is related to asymptotic properties of even moments of infinitely divisible distributions.
%The first theorem allows us to prove the martingale property of the stochastic exponential of Brownian integrals of diffusive kernels, so that, in particular, the density process is continuous. We then show why the same technique does not apply to the case of general Lévy kernels and how this fact is related to asymptotic properties of even moments of infinitely divisible distributions. Finally, we state the main result on martingale exponentials, which is sufficiently general to  validate general Esscher-type measure changes.

Let us begin with our main modeling assumption. We consider the same stochastic basis as in Assumption \ref{stoch_basis}.
%$$
%J(t)=\int_0^t \int_\bR y\,\overline N(ds,dy).
%$$
%The processes $W$ and $J$ are both of dimension $k$ and independent of each other. 
\begin{assumption}\label{18}
Take an integer $m\in\bN$. Assume that the following stochastic differential equation 
\begin{equation}\label{22}
dX(t)=b(t,X(t))dt+\nu(t,X(t))dW(t)+\eta(t,X(t-))dJ(t)
\end{equation}
admits a unique solution $X$ taking values in $\bR^{m}$, for a drift $b:[0,\mathbb{T}]\times\bR^{m}\to \bR^{m}$ and volatility coefficients $\nu,\eta:[0,\mathbb{T}]\times\bR^{m}\to \bR^{m\times k}$. The forward prices $f(t,T)$ have an affine representation
\begin{equation}\label{4}
f(t,T)=\alpha(t,T)X(t)+\beta(t,T),
\end{equation}
for some deterministic matrix/vector-valued functions $\alpha:\mathcal{A}^{\mathbb{T}}_1\to\bR^{n\times m}$, $\beta:\mathcal{A}^{\mathbb{T}}_1\to\bR^{n}$, which we assume to be bounded on $\mathcal{A}^{\mathbb{T}}_1$ and
% continuous in $(t,T) \in \mathcal{A}^{\mathbb{T}}_1$ 
continuously differentiable in $t\in[0,T]$, for all $T\leq\bT$.
%in $T\in[t,\bT]$ for fixed $t\in[0,\bT]$. 
In view of the no-arbitrage condition \eqref{2}, we define the swap price process $F(t,T_1,T_2)$ by
\begin{equation}\label{112}
F(t,T_1,T_2)=\int_{T_1}^{T_2} \widehat{w}(T,T_1,T_2) f(t,T)\,dT,
\end{equation}
where $\widehat w$ is a weight function as in Definition \ref{65}.
\end{assumption}
%\begin{remark}

%\begin{equation}
%F(t,T_1,T_2)=A(t,T_1,T_2)X(t)+B(t,T_1,T_2),
%\end{equation}
%where
%\begin{align}
%A(t,T_1,T_2) &= \int_{T_1}^{T_2} \widehat{w}(T,T_1,T_2) \alpha(t,T)\,dT,\\
%B(t,T_1,T_2) &= \int_{T_1}^{T_2} \widehat{w}(T,T_1,T_2) \beta(t,T)\,dT
%\end{align}
%\end{remark}
The process $X$ can be interpreted as a stochastic state variable underlying the forward curves for \emph{all} maturities $T$. This might be related to the spot price, which is the case, for instance, of the models that we propose in Sections 4 and 5. Observe, in particular, that the stochastic processes $f(\cdot,T)$ and $F(\cdot,T_1,T_2)$ are Markovian. The question of Markovianity for this class of models has been discussed also in \cite[Equation 4.8]{Benth20081116}, where the authors show that even simple log-normal specifications of the forward dynamics lead, in general, to non-Markovian swap price processes (unless interpreted as infinite-dimensional stochastic processes). Instead, Assumption \ref{18} allows us to preserve the Markov property of our models and, thus, analytical tractability.

\subsection{Dynamics}

%In the previous section we described the forward and swap dynamics by the stochastic differential equations \eqref{13} and \eqref{14} of mean-reverting type.
%In Proposition \ref{52} we also discussed the relations that must hold among the coefficients for the validity of the no-arbitrage constraint in \eqref{2}. 
%Hence
Since we want a dynamical behavior of mean-reverting type for $f$ and $F$ (see \eqref{13} and \eqref{14}), the affine structure of forward prices assumed in \eqref{4} determines in a rather natural way the corresponding functional form of the coefficients expected in \eqref{22}. 
%Vice versa, if $f$ can be represented in the form \eqref{4}, where $X$ satisfies a certain dynamics, there must be certain relations among the SDEs \eqref{22}, \eqref{13} and \eqref{14}. 
In this regard, we have two symmetrical results. 
%Let us assume that $f$ and $F$ have dynamics as in \eqref{13} and \eqref{14} where $c,\sigma,\psi,\lambda$ satisfy Assumption \ref{1} and the hypotheses of Proposition \ref{52}.
%\begin{align}\label{53}
%df(t,T)&=(c(t,T)-\lambda(t) f(t,T))\, dt+\sigma(t,T)dW(t),\\
%\label{54} 
%dF(t,T_1,T_2)&=(C(t,T_1,T_2)-\lambda(t) F(t,T_1,T_2))\, dt+\Sigma(t,T_1,T_2)dW(t),
%\end{align}  
%where $c,\sigma$ and $\lambda$ satisfy Assumption \ref{1} and define
%\begin{align}
%C(t,T_1,T_2)&=\int_{T_1}^{T_2} \widehat w(T,T_1,T_2) c(t,T)\,dT, \\ 
%\Sigma(t,T_1,T_2)&=\int_{T_1}^{T_2} \widehat w(T,T_1,T_2) \sigma(t,T)\,dT,
%\end{align}
%so that
%$$
%F(t,T_1,T_2)=\int_{T_1}^{T_2} \widehat{w}(T,T_1,T_2) f(t,T)\,dT.
%$$
%thus, specifically, its coefficients $b,\nu$ and $\eta$. 
%They are formulated in terms of $f$ and $X$, since the characterization of $F$ is direct consequence of \eqref{112}.

\begin{prop}\label{27}
%Under Assumption \ref{18}, 
Assume in \eqref{22} that the coefficients $b,\nu$ and $\eta$ take the following affine form
\begin{align}\label{31}
b(t,X(t))&=\theta(t)+\Theta(t)X(t),\\ \label{31b}
\nu(t,X(t))&=v(t),\\ \label{31c}
\eta(t,X(t))&=z(t),
\end{align}
where $\theta:[0,\mathbb{T}]\to\bR^{m}$ is integrable, $v:[0,\mathbb{T}]\to\bR^{m\times k}$ and $z:[0,\mathbb{T}]\to\bR^{m\times k}$ are square-integrable and $\Theta:[0,\mathbb{T}]\to\bR^{m\times m}$ is bounded. If there exists a continuous matrix-valued function $\lambda:[0,\mathbb{T}]\to\bR^{n\times n}$ such that
\begin{align}\label{23}
\lambda(t)\alpha(t,T)&=-\alpha_t(t,T)-\alpha(t,T)\Theta(t),
\end{align}
then, by defining
\begin{align}
\label{24}
c(t,T)&:=\lambda(t)\beta(t,T)+\beta_t(t,T)+\alpha(t,T)\theta(t),
\\ \label{25}
\sigma(t,T)&:=\alpha(t,T)v(t),\\
\label{91}
\psi(t,T)&:=\alpha(t,T)z(t),
\end{align}
the stochastic process $f(t,T)$ is the unique solution of
\begin{equation*}\label{26}
df(t,T)=(c(t,T)-\lambda(t) f(t,T))\, dt+\sigma(t,T)dW(t)+\psi(t,T)dJ(t).
\end{equation*}
Furthermore, if $\widehat w$ satisfies the assumptions of Proposition \ref{52}, $F(t,T_1,T_2)$ is the unique solution of
\begin{equation*}
dF(t,T_1,T_2)=(C(t,T_1,T_2)-\lambda(t) F(t,T_1,T_2))\, dt+\Sigma(t,T_1,T_2)\,dW(t)+\Psi(t,T_1,T_2)\,dJ(t),
\end{equation*}
where $C,\Sigma, \Psi$ are defined in \eqref{19},\eqref{21}, \eqref{192021}.
\end{prop}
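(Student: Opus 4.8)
The plan is to read off the dynamics of $f(\cdot,T)$ directly from the affine representation \eqref{4} by differentiating in $t$ and inserting the state equation \eqref{22}. Since $f(t,T)=\alpha(t,T)X(t)+\beta(t,T)$ is affine in $X$, with $\alpha(\cdot,T)$ and $\beta(\cdot,T)$ deterministic and $C^1$ in $t$ (hence of finite variation and continuous), the integration-by-parts formula for semimartingales applies with vanishing covariation term, so that
\begin{equation*}
df(t,T)=\alpha_t(t,T)X(t)\,dt+\alpha(t,T)\,dX(t)+\beta_t(t,T)\,dt .
\end{equation*}
Substituting the affine coefficients \eqref{31}--\eqref{31c} into \eqref{22} and splitting into the $dt$, $dW$ and $dJ$ parts, the diffusion and jump coefficients come out as $\alpha(t,T)v(t)$ and $\alpha(t,T)z(t)$, i.e.\ exactly $\sigma(t,T)$ and $\psi(t,T)$ of \eqref{25} and \eqref{91}; no left-limit subtlety arises in the jump term because $z$ is deterministic and $\alpha(\cdot,T)$ is continuous.

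Next I would treat the drift, which collects as $\bigl(\alpha_t(t,T)+\alpha(t,T)\Theta(t)\bigr)X(t)+\alpha(t,T)\theta(t)+\beta_t(t,T)$. The point of hypothesis \eqref{23} is precisely to rewrite the bracket as $-\lambda(t)\alpha(t,T)$, turning the state-dependent part into $-\lambda(t)\alpha(t,T)X(t)=-\lambda(t)\bigl(f(t,T)-\beta(t,T)\bigr)$. Regrouping the remaining deterministic terms and using the definition \eqref{24} of $c$, the drift becomes exactly $c(t,T)-\lambda(t)f(t,T)$, so $f(\cdot,T)$ solves the mean-reverting SDE \eqref{13} with the coefficients just constructed.

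For the uniqueness claim I would check that $c,\sigma,\psi$ of \eqref{24}--\eqref{91} meet Assumption \ref{1}: boundedness of $\alpha,\beta$ and of their $t$-derivatives, integrability of $\theta$, square-integrability of $v,z$, and continuity of $\lambda$ together make $c$ integrable and $\sigma,\psi$ square-integrable on $[0,T]$, while $\lambda(\cdot)$ is continuous by assumption. The existence-and-uniqueness result recalled after Assumption \ref{cp} then guarantees that \eqref{13} has a unique square-integrable solution, which must coincide with $f(\cdot,T)$. The statement about $F$ is then immediate: the mean-reversion matrix $\lambda$ is delivery-independent, so condition \eqref{20} of Proposition \ref{52} holds with $\Lambda=\lambda$, and once $f(\cdot,T)$ is known to solve \eqref{13} and $\widehat w$ satisfies the hypotheses of Proposition \ref{52}, invoking that proposition yields that $F(\cdot,T_1,T_2)$ defined by \eqref{112} solves \eqref{14} with $C,\Sigma,\Psi$ as in \eqref{19}, \eqref{21}, \eqref{192021}.

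The computation is essentially mechanical, so I do not expect a genuine obstacle; the only points requiring care are verifying that the constructed coefficients indeed satisfy Assumption \ref{1} (which is exactly where the boundedness and integrability hypotheses on $\alpha,\beta,\theta,v,z$ are consumed) and checking that the integrability conditions imposed on $\widehat w$ in Proposition \ref{52} are compatible with the affine form of $\sigma,\psi$, so that the stochastic Fubini argument underlying that proposition may be applied.
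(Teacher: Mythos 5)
Your proposal is correct and follows essentially the same route as the paper: It\^{o}'s (product) formula applied to $f(t,T)=\alpha(t,T)X(t)+\beta(t,T)$, substitution of the affine coefficients \eqref{31}--\eqref{31c}, and use of hypothesis \eqref{23} together with the definitions \eqref{24}--\eqref{91} to match the drift, diffusion and jump terms of \eqref{26}, with the swap statement handled via Proposition \ref{52}. Your explicit verification that the constructed $c,\sigma,\psi$ satisfy Assumption \ref{1} (so that uniqueness of the solution applies) is a point the paper leaves implicit, but it is a refinement of the same argument rather than a different one.
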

\begin{proof}
Let us apply It\^o's Lemma to
$$
f(t,T)=\alpha(t,T)X(t)+\beta(t,T).
$$
Hence,
\begin{align}\nonumber
df(t,T)=(\beta_t(t,T)+\alpha(t,T)\theta(t)+&\alpha_t(t,T)X(t)+\alpha(t,T)\Theta(t)X(t))dt\\ \label{108}
&+\alpha(t,T)v(t)dW(t)+\alpha(t,T)z(t)dJ(t).
\end{align}
Since, by assumption, there exists a function $\lambda:[0,\mathbb{T}]\to\bR^{n\times n}$ satisfying
\begin{equation*}
\alpha_t(t,T)+\alpha(t,T)\Theta(t)=-\lambda(t)\alpha(t,T)
\end{equation*}
whereas $c(t,T)$, $\sigma(t,T)$ and $\psi(t,T)$ satisfy, by definition,
\begin{align*}
\beta_t(t,T)+\alpha(t,T)\theta(t)&=c(t,T)-\lambda(t)\beta(t,T),
\\
\alpha(t,T)v(t)&=\sigma(t,T),\\
\alpha(t,T)z(t)&=\psi(t,T),
\end{align*}
the statement for $f$ follows after substituting these expressions into \eqref{108}. The result for $F$ can be proven along the same lines of Proposition \ref{52}.
\end{proof}
The following proposition can be seen as the converse of the previous one.
\begin{prop}\label{58}
%Under Assumption \ref{18}, 
If $f(t,T)$ is the unique solution of 
\begin{equation}\label{76}
df(t,T)=(c(t,T)-\lambda(t) f(t,T))\, dt+\sigma(t,T)dW(t)+\psi(t,T)dJ(t),
\end{equation}
for $c,\sigma,\psi$ and $\lambda$ satisfying Assumption \ref{1}, then $b,\nu$ and $\eta$ in \eqref{22} take the following affine form
\begin{align}\label{100}
\alpha(t,T)b(t,X(t))&=\widetilde\theta(t,T)+\widetilde\Theta(t,T) X(t)\\ \label{101}
\alpha(t,T)\nu(t,X(t))&=\sigma(t,T),\\ \label{102}
\alpha(t,T)\eta(t,X(t))&=\psi(t,T),
\end{align}
with
\begin{align}\label{28}
\widetilde\theta(t,T)&=c(t,T)-\lambda(t)\beta(t,T)-\beta_t(t,T),\\ \label{29}
\widetilde\Theta(t,T)&=-\lambda(t)\alpha(t,T)-\alpha_t(t,T).
\end{align}
\end{prop}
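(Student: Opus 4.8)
The plan is to run the Itô computation of Proposition \ref{27} in reverse: start from the affine ansatz \eqref{4}, differentiate it using the dynamics \eqref{22} of $X$, and then match the resulting semimartingale decomposition against the prescribed mean-reverting dynamics \eqref{76}.

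First I would apply Itô's formula to $f(t,T)=\alpha(t,T)X(t)+\beta(t,T)$. Since $\alpha(\cdot,T)$ and $\beta(\cdot,T)$ are deterministic and $C^1$ in $t$ by Assumption \ref{18}, only the time-derivatives of $\alpha,\beta$ and the stochastic differential of $X$ contribute, giving
\begin{align*}
df(t,T)=\bigl(&\beta_t(t,T)+\alpha_t(t,T)X(t)+\alpha(t,T)b(t,X(t))\bigr)dt\\
&+\alpha(t,T)\nu(t,X(t))\,dW(t)+\alpha(t,T)\eta(t,X(t-))\,dJ(t).
\end{align*}
On the other hand, substituting the same ansatz into the drift of \eqref{76} yields
\begin{equation*}
df(t,T)=\bigl(c(t,T)-\lambda(t)\beta(t,T)-\lambda(t)\alpha(t,T)X(t)\bigr)dt+\sigma(t,T)\,dW(t)+\psi(t,T)\,dJ(t).
\end{equation*}

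Next I would invoke the uniqueness of the canonical decomposition of a special semimartingale: since $W$ is a Brownian motion and $J$ a compensated Poisson integral, independent of each other, the finite-variation (drift) part, the continuous-martingale part, and the purely discontinuous martingale part are determined up to indistinguishability. Equating the two $dW$-integrands gives $\alpha(t,T)\nu(t,X(t))=\sigma(t,T)$, i.e. \eqref{101}; equating the two $dJ$-integrands gives \eqref{102}, up to the harmless distinction between $X(t)$ and $X(t-)$, which agree for $dt$-almost every $t$ since $X$ has at most countably many jumps. Equating the drift integrands and solving for $\alpha(t,T)b(t,X(t))$ produces
\begin{equation*}
\alpha(t,T)b(t,X(t))=\bigl(c(t,T)-\lambda(t)\beta(t,T)-\beta_t(t,T)\bigr)+\bigl(-\lambda(t)\alpha(t,T)-\alpha_t(t,T)\bigr)X(t),
\end{equation*}
which is exactly \eqref{100} with $\widetilde\theta,\widetilde\Theta$ as in \eqref{28}--\eqref{29}.

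Since every step is a direct algebraic rearrangement of the computation in Proposition \ref{27}, there is no genuine analytic difficulty; the result is essentially a uniqueness-of-decomposition statement. The only points requiring care are: (i) justifying the coefficient matching, for which one checks that the integrability hypotheses of Assumption \ref{1}, together with the boundedness of $\alpha,\beta$, place both sides in the class of square-integrable special semimartingales so that the uniqueness of the canonical decomposition applies; and (ii) noting that the identities hold only for $dt$-almost every $t\in[0,T]$ and, because $\alpha(t,T)\in\bR^{n\times m}$ need not be invertible, they characterize the \emph{products} $\alpha b,\alpha\nu,\alpha\eta$ rather than $b,\nu,\eta$ individually --- which is precisely the form in which \eqref{100}--\eqref{102} are stated.
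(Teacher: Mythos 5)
Your proposal is correct and follows essentially the same route as the paper: apply It\^{o}'s Lemma to the affine ansatz $f(t,T)=\alpha(t,T)X(t)+\beta(t,T)$ and compare the resulting coefficients with those of \eqref{76}. Your additional remarks --- justifying the matching via uniqueness of the canonical semimartingale decomposition, the $X(t)$ versus $X(t-)$ point, and the observation that non-invertibility of $\alpha(t,T)$ is why the conclusion concerns the products $\alpha b$, $\alpha\nu$, $\alpha\eta$ --- merely make explicit what the paper's terser ``compare the coefficients'' step leaves implicit.
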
 
\begin{proof}
As in the proof of Proposition \ref{27}, the statement is direct consequence of It\^{o}'s Lemma applied to
$$
f(t,T)=\alpha(t,T)X(t)+\beta(t,T),
$$ 
which gives that
%\footnote{We use the short-hand notation $\alpha_t(t,T)=\{\partial_t\alpha_{ij}(t,T)\}_{i=1,\dots,n;\, j=1,\dots,m}$ and $\beta_t(t,T)=\{\partial_t \beta_{i}(t,T)\}_{i=1,\dots,n}$.}
%$$
%df(t,T)=\alpha_t(t,T)X(t)dt+\beta_t(t,T)dt+\alpha(t,T)dX(t).
%$$
%By plugging in the dynamics of $X$, we get
\begin{align*}
df(t,T)=(\beta_t(t,T)+\alpha_t(t,T)X(t)&+\alpha(t,T)b(t,X(t)))dt\\
&+\alpha(t,T)\nu(t,X(t))dW(t)+\alpha(t,T)\eta(t,X(t-))dJ(t).
\end{align*}
It is then sufficient to compare the coefficients of the last equation to the ones of \eqref{76}. 
%Let us compare the last equation to \eqref{76}, for some deterministic coefficients $c\in\bR^{n}$, $\lambda\in\bR^{n\times n}$, $\sigma\in\bR^{n\times k}$, $\psi\in\bR^{n\times k}$. This implies that the drift and volatility coefficients of $X$ must satisfy \eqref{100}, \eqref{101} and \eqref{102}, where $\widetilde\theta(t,T)$ and $\widetilde\Theta(t,T)$ are defined in \eqref{28} and \eqref{29}.
\end{proof}

\subsection{Arbitrage}

%Under the risk-neutral modeling paradigm, all discounted assets must be martingales.
We now investigate the question of arbitrage in the context of our HJM-type forward model. 
A sufficient condition for an arbitrage-free market, together with \eqref{112}, is the existence of an equivalent martingale measure, which is a probability measure $\bQ$ equivalent to $\bP$ such that the discounted price processes of all traded contracts are $\bQ$-martingales.
%In order to to have an arbitrage-free market, together with \eqref{2}, is the existence of an equivalent martingale measure, which is a probability measure $\bQ$ equivalent to $\bP$ such that the discounted price processes of all traded contracts are $\bQ$-martingales. 
Notice that forwards and swaps are costless to enter, so their price processes should be $\bQ$-martingale without any discounting. %This means that we can write, from \eqref{13} and \eqref{14},
%\begin{align*}
%df(t,T)&=\sigma(t,T)\,dW^\bQ(t)+\psi(t,T)\,dJ^\bQ(t),
%\\
%dF(t,T_1,T_2)&=\Sigma(t,T_1,T_2)\,dW^\bQ(t)+\Psi(t,T_1,T_2)\,dJ^\bQ(t),
%\end{align*} 
%with some $\bQ$-Brownian motion $W^\bQ$ and $\bQ$-martingale $J^\bQ$ to be defined. 
%In interest rate theory, this leads to the well-known HJM drift condition.
We also mention that the well-known \emph{HJM drift condition} for forward {\it rate} models in interest rate theory is not 
present in our context, as forward contracts are tradeable assets unlike the forward rates, which model in a nonlinear way the bond price dynamics. 
%{\bf Note: Changed wording slightly in last sentence!}
%ince in interest rates theory it applies to bond dynamics, which have no analogous in our forward market. 

Firstly, we introduce the candidate density processes. 

%As already indicated, within this setting it is possible to characterize the set of equivalent martingale measures $\bQ$ for the forward price processes $f(\cdot,T),F(\cdot,T_1,T_2)$ for any $T>0$ and $T_2> T_1>0$. 
\begin{assumption}\label{105}
We assume that $N(ds,dy)$ is the Poisson random measure of a Lévy process $J$ with independent components $J_j$, for $j=1,\ldots,k$. We indicate by $N_j(ds,dy_j)$ the Poisson random measure of $J_j$ and by $\overline N_j(ds,dy_j)  = N_j(ds,dy_j)-ds\,\nu_j(dy_j)$ the corresponding compensated version. 
\end{assumption}
Let $\phi=(\phi^{(j)})_{j=1,\ldots,k}$ and $\xi=(\xi^{(j)})_{j=1,\ldots,k}$ be $k$-dimensional adapted processes such that
\begin{equation}\label{109}
\bE\left[\int_0^{\mathbb{T}} \|\phi(s)\|^2 ds\right]<\infty,\qquad 
\bE\left[\int_0^{\mathbb{T}} \|\xi(s)\|^2 ds\right]<\infty,
\end{equation}
and define, for $t\in[0,\mathbb{T}]$, the process $Z$ as the unique strong solution of 
\begin{equation}\label{90}
dZ(t)=Z(t-)\,dH(t),
\end{equation}
such that $Z(0)=1$, where
\begin{align}\nonumber
H(t):=&\int_0^t \phi^\intercal(s) dW(s)+\int_0^t \int_{\bR^k}\xi^\intercal(s-) \,y\,\overline N(ds,dy),\\ \label{87}
&=\sum_{j=1}^k \left(\int_0^t \phi^{(j)}(s) dW_j(s)+\int_0^t \int_{\bR}\xi^{(j)}(s-) \,y_j\,\overline N_j(ds,dy_j)\right).
\end{align}
If the processes $\phi$ and $\xi$ satisfy \eqref{109}, $H$ is a well-defined square-integrable martingale.
The process $Z$ is called the \emph{stochastic} or \emph{Dol\'{e}ans-Dade exponential} of $H$ and is sometimes indicated as
$$
Z(t)=\mathcal{E}(H)(t).
$$
More explicitly, it can be written as (cf. for instance \cite[Theorem II.37]{MR2020294})
\begin{align}\label{86}
Z(t)=e^{H(t)-\frac12 \int_0^t \|\phi(s)\|^2\ ds}\prod_{0<s\leq t} (1+\Delta H(s))\,e^{-\Delta H(s)}.
\end{align} 
If $Z$ is a strictly positive martingale, we can introduce an equivalent probability measure $\bQ$ defining its Radon-Nikodym derivative as
\begin{equation}\label{70}
\frac{d\bQ}{d\bP}:=Z(\mathbb{T}).%=\exp\left(\int_0^{\mathbb{T}} \phi^\intercal(s)dW(s)-\frac12\int_0^{\mathbb{T}} \|\phi(s)\|^2 ds \right).
\end{equation}
%Let us assume for a moment that $Z$ is a strictly positive martingale and, therefore, the measure $\bQ$ defined by \eqref{70} is a well-defined probability measure equivalent to $\bP$.
Define the stochastic process 
\begin{equation}\label{wq}
W^\bQ(t):=W(t)-\int_0^t\phi(s)\,ds,
\end{equation}
the random measure 
\begin{equation}\label{nq}
\overline N^\bQ(dt,dy):=\overline N(dt,dy)-\xi(t)^\intercal y\,\nu(dy)dt 
\end{equation}
and the jump process
\begin{equation}\label{jq}
J^\bQ(t):=\int_{\bR^k} y\,\overline N^\bQ(dt,dy)=J(t)-\int_0^t K\,\xi(s) ds, 
\end{equation}
where
\begin{equation*}
K=\mathrm{diag}(\kappa),\qquad \kappa=(\kappa_j)_{j=1,\dots,k},\qquad \kappa_j=\int_\bR y^2\,\nu_j(dy).
\end{equation*}

The next two propositions identify the subfamily of models %satisfying Assumption \ref{18} and 
admitting the existence of an equivalent martingale measure. 
In the remainder of this paper, we assume the same assumptions of %Assumption \ref{105} and 
Proposition \ref{27}, so that, in particular, $X$ evolves as 
\begin{equation}\label{89}
dX(t)=\bigl(\theta(t)+\Theta(t)X(t)\bigr)dt+v(t)dW(t)+z(t)dJ(t).
\end{equation}
%and \eqref{23}, \eqref{24}, \eqref{25} and \eqref{91} hold. 
%Given this affine structure of $X$, it is natural to have (stochastic) kernels $\phi$ and $\xi$ in the affine form
%\begin{align} \label{phiX}
%\phi(t)&=\phi_1(t)X(t)+\phi_0(t),\\ \label{xiX}
%\xi(t)&=\xi_1(t)X(t)+\xi_0(t).
%\end{align}

%there exists a matrix-valued function $\lambda:[0,\mathbb{T}]\to\bR^{n\times n}$ satisfying
%\begin{align*}
%\lambda(t)\alpha(t,T)&=-\alpha_t(t,T)-\alpha(t,T)\Theta(t),
%\end{align*}
%whereas the coefficients in \eqref{13} are defined as
%\begin{align*}
%c(t,T)&=\lambda(t)\beta(t,T)+\beta_t(t,T)+\alpha(t,T)\theta(t),
%\\ 
%\sigma(t,T)&=\alpha(t,T)v(t).
%\end{align*}

%Firstly, we have the characterization of the arbitrage-free dynamics. 
\begin{prop}\label{61}
%Let Assumption \ref{105} hold. 
If $f$ satisfies 
$$
df(t,T)=\sigma(t,T)\,dW^\bQ(t)+\psi(t,T)\,dJ^\bQ(t),
$$
then 
\begin{equation}\label{67}
\alpha(t,T)\bigl(v(t)\phi(t)+z(t)\,K\,\xi(t)\bigr)=\Phi_1(t,T)X(t)+\Phi_0(t,T),
\end{equation}
for $\Phi_0$ and $\Phi_1$ denoting
\begin{align}\label{32}
\Phi_1(t,T)&=-\alpha_t(t,T)-\alpha(t,T)\Theta(t),\\ \label{33} 
\Phi_0(t,T)&=-\beta_t(t,T)-\alpha(t,T)\theta(t).
\end{align}
In particular, if
\begin{align}\label{83}
\phi(t)&=\phi_1(t)X(t)+\phi_0(t),\\ \label{84}
\xi(t)&=\xi_1(t)X(t)+\xi_0(t),
\end{align}
then
\begin{align}\label{81}
\Phi_1(t,T)&=\alpha(t,T)\bigl(v(t)\phi_1(t)+z(t)\,K\,\xi_1(t)\bigr),\\ \label{82} 
\Phi_0(t,T)&=\alpha(t,T)\bigl(v(t)\phi_0(t)+z(t)\,K\,\xi_0(t)\bigr).
\end{align}
Finally, $\alpha$ and $\beta$ satisfy the following linear ordinary differential equations:
\begin{align}\label{6}
\alpha_t(t,T)=-\alpha(t,T)\gamma_1(t),\\ \label{8} 
\beta_t(t,T)=-\alpha(t,T)\gamma_0(t),
\end{align}
where $\gamma_0$ and $\gamma_1$ are
\begin{align}\label{9}
\gamma_1(t)&=v(t)\phi_1(t)+z(t)\,K\,\xi_1(t)+\Theta(t),\\ \label{10} 
\gamma_0(t)&=v(t)\phi_0(t)+z(t)\,K\,\xi_0(t)+\theta(t).
\end{align}
\end{prop}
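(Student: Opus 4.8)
The plan is to derive all three conclusions directly from It\^o's Lemma applied to the affine representation \eqref{4}, combined with the $\bP$-dynamics \eqref{89} of the state variable, and then to impose the $\bQ$-martingale hypothesis. First I would differentiate $f(t,T)=\alpha(t,T)X(t)+\beta(t,T)$; since $f$ is linear in $X$ and $\alpha,\beta$ are deterministic and $C^1$ in $t$, there is no second-order term and It\^o gives
\begin{align*}
df(t,T)&=\bigl(\alpha_t(t,T)X(t)+\alpha(t,T)(\theta(t)+\Theta(t)X(t))+\beta_t(t,T)\bigr)dt\\
&\qquad+\alpha(t,T)v(t)\,dW(t)+\alpha(t,T)z(t)\,dJ(t).
\end{align*}
The diffusive coefficients are exactly $\sigma=\alpha v$ and $\psi=\alpha z$ of \eqref{25}, \eqref{91}. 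Substituting the measure changes $dW(t)=dW^\bQ(t)+\phi(t)\,dt$ from \eqref{wq} and $dJ(t)=dJ^\bQ(t)+K\xi(t)\,dt$ from \eqref{jq}, the stochastic part reproduces the hypothesised representation $\sigma\,dW^\bQ+\psi\,dJ^\bQ$, so the hypothesis is equivalent to the vanishing of the resulting $dt$-term. Collecting that term and separating the part proportional to $X(t)$ from the constant part yields \eqref{67} with $\Phi_1,\Phi_0$ given by \eqref{32}, \eqref{33}. Notably this step uses only the zero-drift condition and needs neither \eqref{23} nor the explicit form of $c$.

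Next, to reach \eqref{81}, \eqref{82} I would substitute the affine kernels \eqref{83}, \eqref{84} into the left-hand side of \eqref{67}, obtaining the affine expression
$$
\alpha(t,T)\bigl(v(t)\phi_1(t)+z(t)K\xi_1(t)\bigr)X(t)+\alpha(t,T)\bigl(v(t)\phi_0(t)+z(t)K\xi_0(t)\bigr),
$$
and compare it term-by-term with $\Phi_1(t,T)X(t)+\Phi_0(t,T)$. Since \eqref{67} holds pathwise for almost every $\omega$ and every $t$, identifying the coefficient of $X(t)$ with $\Phi_1$ and the constant term with $\Phi_0$ gives \eqref{81}, \eqref{82}. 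The one point deserving care — and the main, if modest, obstacle — is to justify this identification: two affine maps of $X(t)$ agree only if they coincide on a set with nonempty interior, which holds here because the law of $X(t)$ has full support in $\bR^m$ as soon as the Gaussian component is non-degenerate. I would flag this explicitly rather than equate coefficients purely formally.

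Finally, the ODEs \eqref{6}, \eqref{8} follow by eliminating $\Phi_1,\Phi_0$ between their two representations. Equating \eqref{32} with \eqref{81} gives
$$
-\alpha_t(t,T)-\alpha(t,T)\Theta(t)=\alpha(t,T)\bigl(v(t)\phi_1(t)+z(t)K\xi_1(t)\bigr),
$$
and moving $\alpha(t,T)\Theta(t)$ to the right and factoring $\alpha(t,T)$ on the left produces $\alpha_t(t,T)=-\alpha(t,T)\gamma_1(t)$ with $\gamma_1$ as in \eqref{9}. In the same way, equating \eqref{33} with \eqref{82} and rearranging yields $\beta_t(t,T)=-\alpha(t,T)\gamma_0(t)$ with $\gamma_0$ as in \eqref{10}. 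All three parts are short computations; beyond routine It\^o calculus the only genuine content is the support argument of the second paragraph.
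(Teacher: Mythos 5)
Your proof is correct and follows essentially the same route as the paper's: It\^o's Lemma applied to $f(t,T)=\alpha(t,T)X(t)+\beta(t,T)$, substitution of the $\bQ$-dynamics via \eqref{wq} and \eqref{jq}, the zero-drift requirement giving \eqref{67} with \eqref{32}--\eqref{33}, and coefficient comparison under the affine kernels yielding \eqref{81}--\eqref{82} and the ODEs \eqref{6}, \eqref{8}. Your extra full-support/non-degeneracy justification for equating the affine coefficients is a harmless refinement the paper silently skips (its proof reads \eqref{81}--\eqref{82} as the specification making the drift vanish rather than as a forced identification), not a different approach.
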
 
\begin{proof}
It\^{o}'s Lemma applied to $f(t,T)=\alpha(t,T)X(t)+\beta(t,T)$ gives
\begin{align*}
df(t,T)=(\beta_t(t,T)+\alpha(t,T)\theta(t)+\alpha_t(t,T)X(t)&+\alpha(t,T)\Theta(t)X(t))dt\\
&+\alpha(t,T)v(t)dW(t)+\alpha(t,T)z(t)dJ(t).
\end{align*}
By replacing $W$ and $J$ with $W^\bQ$ and $J^\bQ$, we get 
\begin{align*}
df(t,T)=\bigl(&\beta_t(t,T)+\alpha(t,T)\theta(t)+\alpha_t(t,T)X(t)+\alpha(t,T)\Theta(t)X(t)
\\
&+\alpha(t,T)v(t)\phi(t)+\alpha(t,T)z(t)\,K\,\xi(t)\bigr)dt+\alpha(t,T)z(t)dJ^\bQ(t)+\alpha(t,T)v(t)dW^\bQ(t).
\end{align*}
In order to have zero drift, \eqref{67} must hold for $\Phi_0$ and $\Phi_1$ as in \eqref{32} and \eqref{33}. In particular, the last equation holds if $\phi$ and $\xi$ take the affine form \eqref{83} and \eqref{84} with $\Phi_0$ and $\Phi_1$ defined by \eqref{81} and \eqref{82}, so that
\begin{align*}
\alpha_t(t,T)+\alpha(t,T)\Theta(t)+\alpha(t,T)v(t)\phi_1(t)+\alpha(t,T)z(t)\,K\,\xi_1(t)&=0,\\ 
\beta_t(t,T)+\alpha(t,T)\theta(t)+\alpha(t,T)v(t)\phi_0(t)+\alpha(t,T)z(t)\,K\,\xi_0(t)&=0,
\end{align*}
which yield \eqref{6} and \eqref{8}.
\end{proof}
The converse result holds as well.
\begin{prop}\label{60}
Assume 
%the same hypotheses of Proposition \ref{27} and 
that the following conditions are fulfilled:
\begin{itemize}
\item $\alpha$ and $\beta$ satisfy, for some continuous functions $\gamma_0:[0,\mathbb{T}]\to\bR^m$ and $\gamma_1:[0,\mathbb{T}]\to\bR^{m\times m}$, the linear ordinary differential equations \eqref{6} and \eqref{8};

\item  there exist adapted processes $\phi(t)$ and $\xi(t)$ satisfying \eqref{109} of the form
\begin{align*}
\phi(t)&=\phi_1(t)X(t)+\phi_0(t),\\
\xi(t)&=\xi_1(t)X(t)+\xi_0(t),
\end{align*}
where $\phi_0,\phi_1,\xi_0$ and $\xi_1$ are in relation with $\gamma_0$ and $\gamma_1$ by \eqref{9} and \eqref{10};

\item the process $Z$ defined in \eqref{90} is a strictly positive $\bP$-martingale on $[0,\mathbb{T}]$.
\end{itemize}
Then, by defining $\frac{d\bQ}{d\bP}=Z(\bT)$, it holds that $W^\bQ$ in \eqref{wq} is a Brownian motion under $\bQ$ and $\overline N^\bQ(dt,dy)$ defined in \eqref{nq} is the $\bQ$-compensated Poisson random measure of $N$ in the sense of \cite[Theorem 1.35]{MR2322248}. Also, we have
\begin{align*}
df(t,T)&=\sigma(t,T)\,dW^\bQ(t)+\psi(t,T)\,dJ^\bQ(t),
\\
dF(t,T_1,T_2)&=\Sigma(t,T_1,T_2)\,dW^\bQ(t)+\Psi(t,T_1,T_2)\,dJ^\bQ(t),
\end{align*} 
with the same coefficients as in Proposition \ref{27} and, in particular, the processes $f(\cdot,T)$ and $F(\cdot,T_1,T_2)$ are $\bQ$-martingales for all $T$ and $T_1<T_2$.
\end{prop}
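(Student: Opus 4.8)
The plan is to read this as a Girsanov-type statement. Once $Z$ is known to be a strictly positive $\bP$-martingale — the third hypothesis, whose verification is precisely the content of Theorems \ref{85} and \ref{7} — the measure $\bQ$ with $d\bQ/d\bP=Z(\bT)$ is well defined and equivalent to $\bP$, and everything else reduces to a change-of-measure computation combined with the drift cancellation already performed in Proposition \ref{61}. First I would invoke the Girsanov theorem for jump-diffusions (e.g. \cite[Theorem 1.35]{MR2322248}) applied to the density $Z=\mathcal{E}(H)$ with $H$ as in \eqref{87}. Since the Brownian kernel is $\phi$ and the jump kernel is $\xi$, this yields that $W^\bQ(t)=W(t)-\int_0^t\phi(s)\,ds$ is a $\bQ$-Brownian motion and that the $\bQ$-predictable compensator of $N(dt,dy)$ is $(1+\xi(t-)^\intercal y)\,\nu(dy)\,dt$. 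Rewriting $\overline N(dt,dy)=N(dt,dy)-\nu(dy)\,dt$ shows that $\overline N^\bQ$ in \eqref{nq} is exactly the $\bQ$-compensated Poisson random measure, so $J^\bQ$ in \eqref{jq} is a $\bQ$-martingale.

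Next I would derive the dynamics. Applying It\^o's formula to $f(t,T)=\alpha(t,T)X(t)+\beta(t,T)$ via \eqref{89}, substituting $dW=dW^\bQ+\phi\,dt$ and $dJ=dJ^\bQ+K\xi\,dt$, and inserting the affine kernels $\phi=\phi_1 X+\phi_0$, $\xi=\xi_1 X+\xi_0$, collects the drift into the two brackets $[\alpha_t+\alpha\gamma_1]X$ and $[\beta_t+\alpha\gamma_0]$, where $\gamma_0,\gamma_1$ are given by \eqref{9}--\eqref{10}. The ordinary differential equations \eqref{6} and \eqref{8} make both brackets vanish, leaving $df(t,T)=\alpha(t,T)v(t)\,dW^\bQ(t)+\alpha(t,T)z(t)\,dJ^\bQ(t)=\sigma(t,T)\,dW^\bQ(t)+\psi(t,T)\,dJ^\bQ(t)$, using $\sigma=\alpha v$ and $\psi=\alpha z$. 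This is simply the reverse of the computation in Proposition \ref{61} and is routine bookkeeping. The swap dynamics then follow by integrating this identity against $\widehat w(T,T_1,T_2)$ over $[T_1,T_2]$ and exchanging the order of integration through the stochastic Fubini theorem, exactly as in Proposition \ref{52}, producing the asserted $\Sigma$ and $\Psi$.

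Finally I would address the martingale property. The zero-drift representation already shows that $f(\cdot,T)$ is a $\bQ$-local martingale, and likewise $F(\cdot,T_1,T_2)$ (either directly or as a $\widehat w$-average of the $f(\cdot,T)$). To upgrade to true $\bQ$-martingales I would exploit that the integrands $\sigma,\psi$ are deterministic and square-integrable on $[0,T]$ by Assumption \ref{1}: the Brownian part $\int_0^t\sigma\,dW^\bQ$ is then a Wiener-type Gaussian integral, hence a genuine $\bQ$-martingale, while the jump part $\int_0^t\psi\,dJ^\bQ$ is a square-integrable $\bQ$-martingale provided $\bE_\bQ\!\int_0^\bT\!\int_{\bR^k}\|\psi(s,T)y\|^2\,(1+\xi(s)^\intercal y)\,\nu(dy)\,ds<\infty$.

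I expect this very last step to be the only delicate point. The term coming from the $1$ in the $\bQ$-compensator is finite by $\int_{\bR^k}\|y\|^2\,\nu(dy)<\infty$ and the square-integrability of $\psi$, but the term coming from the state-dependent, unbounded factor $\xi(s)^\intercal y$ pairs the third moment of $\nu$ against $\xi$, and is not automatically controlled by the standing hypotheses \eqref{109} alone. I would therefore either absorb it into the integrability already guaranteed under the Esscher structure of Theorem \ref{85} (or in the explicit models of Sections 4 and 5), or circumvent it by the Bayes argument, checking instead that $Z\,f(\cdot,T)$ is a genuine $\bP$-martingale; the Girsanov identification and the drift cancellation carry no comparable difficulty.
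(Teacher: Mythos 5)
Your proposal is correct and follows essentially the same route as the paper's own proof: Girsanov's theorem in the form of \cite[Theorem 1.35]{MR2322248} applied to $Z=\mathcal{E}(H)$, It\^o's formula on $f(t,T)=\alpha(t,T)X(t)+\beta(t,T)$ with the drift cancelled by \eqref{6}--\eqref{8} (the reverse of Proposition \ref{61}), and the stochastic Fubini argument of Proposition \ref{52} for the swap dynamics. The one place you go beyond the paper is the final integrability check: the paper's proof stops at the zero-drift representation and tacitly treats the $\bQ$-local martingales as true martingales, whereas you correctly observe that the jump integral's $\bQ$-compensator $(1+\xi(t-)^\intercal y)\,\nu(dy)\,dt$ pairs $\|\psi(s,T)\,y\|^2$ with $\xi^\intercal y$, so that \eqref{109} alone does not close the argument; this is a legitimate refinement rather than a different method, and it is indeed resolved exactly as you suggest, since in the regimes where positivity and the martingale property of $Z$ are actually established ($\xi$ deterministic and bounded as in Assumption \ref{103}, $\nu$ with fourth moment as in Theorem \ref{85}) the cubic term is controlled via $\|y\|^3\leq\tfrac12\bigl(\|y\|^2+\|y\|^4\bigr)$, and the Bayes-rule alternative (checking that $Z\,f(\cdot,T)$ is a $\bP$-martingale) is an equally valid fallback.
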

\begin{proof}
By similar arguments as in Proposition \ref{61}, it is enough to apply It\^{o}'s Lemma to $f(t,T)=\alpha(t,T)X(t)+\beta(t,T)$, which gives
$$
df(t,T)=\sigma(t,T)\,dW^\bQ(t)+\psi(t,T)\,dJ^\bQ(t).
$$
Since $F(t,T_1,T_2)=\int_{T_1}^{T_2} \widehat{w}(T,T_1,T_2) f(t,T)\,dT$, we easily obtain that
$$
dF(t,T_1,T_2)=\Sigma(t,T_1,T_2)dW^\bQ(t)+\Psi(t,T_1,T_2)\,dJ^\bQ(t).
$$
As, by assumption, $Z$ is a $\bP$-martingale, the measure $\bQ$ is an equivalent probability measure and we can conclude by Girsanov's Theorem (e.g. \cite[Theorem 1.35]{MR2322248}).
\end{proof}

In Proposition \ref{60} we assumed that $Z$ is a strictly positive $\bP$-martingale on $[0,\mathbb{T}]$, which implies that $Z(\mathbb{T})$ defines the Radon-Nikodym derivative of an equivalent probability measure $\bQ$. As a consequence of Proposition \ref{60}, $\bQ$ is in fact a martingale measure for the market. However, in general $Z$ is only a local martingale. Thus, the interesting question is under which conditions we can prove that $Z$ is a strictly positive true martingale. Tracing through Propositions \ref{61} and \ref{60}, it is natural to have martingale measure kernels in affine form: 
\begin{align}\label{110}
\phi(t)&=\phi_1(t)X(t)+\phi_0(t),\\ \label{111}
\xi(t)&=\xi_1(t)X(t)+\xi_0(t).
\end{align}
The first condition to check is the positivity. From the representation in \eqref{86} we see that $Z$ is strictly positive if and only if $\Delta H > -1$ a.s. 
%In this case, it can also be written as
%\begin{align*}
%Z(t)=\exp&\Bigl(\int_0^t \phi^\intercal(s) dW(s)-\frac12\int_0^t \|\phi(s)\|^2 ds\\
%&+\sum_{j=1}^k \int_0^t \int_{\bR}\bigl[(\ln(1-\xi_j(s) \,y_j)\,\overline N_j(ds,dy_j) 
%+(\ln(1-\xi_j(s) \,y_j)+\xi_j(s) \,y_j) \nu_j(dy_j) ds
%\bigr]
%\Bigr).
%\end{align*}
This boils down to verify that
$$
\Delta H(t) = \xi^\intercal(t-) \Delta J(t) > -1,\qquad\mbox{for every }t\in[0,\bT], \mbox{ a.s.},
$$
which holds under the following assumption.
\begin{assumption}\label{103}
% he jump process $J$ has bounded jumps, that is $\mathrm{supp}\,\nu\subset[-M,M]^k$, for some $M>0$. Moreover, $\xi(t)$ is deterministic  and $|\xi^{(j)}(t)|<\frac1M$ for any $j=1,\dots,k$. Equivalently, t
The jump risk parameter $\xi$ is a bounded and deterministic vector function on $[0,\bT]$ (i.e. $\xi_1\equiv0$ in \eqref{111}) such that $\xi^\intercal(t) \,y>-1$ for $\nu$-a.e. $y\in\bR^k$ and each $t\geq0$.
\end{assumption} 
%\begin{remark}
%In the one-dimensional pure-jump case, that is $k=1$ and $v\equiv0$, we can alternatively assume that $J$ is a subordinator, $X(0)\geq0$, $\theta,\Theta ?$ and $\xi(t)=\xi_1(t)X(t)+\xi_0(t)$, with nonnegative deterministic functions $\xi_1, \xi_0$. 
%\end{remark}
Observe that, in order to define a positive density process, we assume that the jump kernel $\xi$ is deterministic. 

Since a positive local martingale is a supermartingale, in order to prove that $Z$ is a true martingale it is sufficient to verify that $\bE[Z(\bT)]=1$.  Let us then state the main result of this section. %We also mention that the following theorem allows us to prove, in particular, the upcoming Theorem \ref{7}, by just taking jump risk $\xi\equiv0$. 

\begin{theorem}\label{85}
Let us assume that there exist bounded measurable deterministic functions $\phi_0:[0,\mathbb{T}]\to\bR^k$, $\phi_1:[0,\mathbb{T}]\to\bR^{k\times m}$. 
% and $\xi:[0,\mathbb{T}]\to\bR^k$
%such that $\phi_1$ is bounded and $\int_0^{\mathbb{T}} \|\phi_0(s)\|^2 ds<\infty$
Define the Girsanov kernels $\phi(t):=\phi_1(t)X(t)+\phi_0(t)$ and $\xi$ as in Assumption \ref{103}.
%, where $X$ is the unique solution of \eqref{89}.  
Assume, furtherly, that $\nu$ has fourth moment, that is $\int_{\bR^k} \|y\|^4\,\nu(dy)<\infty$.
%$$
%dX(t)=(\theta(t)+\Theta(t)X(t))dt+v(t)dW(t)+z(t)dJ(t)
%$$
%and $J$ satisfies Assumption \ref{103}.
Then, the process $Z$ defined by \eqref{90} is a strictly positive true martingale.
\end{theorem}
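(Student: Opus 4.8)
The plan is to reduce the claim to the single identity $\bE[Z(\bT)]=1$: since $Z$ is a strictly positive local martingale (hence a supermartingale) under Assumption \ref{103}, it is a true martingale on $[0,\bT]$ precisely when its expectation is conserved. The difficulty is that $\phi(t)=\phi_1(t)X(t)+\phi_0(t)$ is stochastic and unbounded (linear in $X$), so neither a direct Novikov condition nor a crude $L^2$-estimate of $Z$ under $\bP$ can work — the latter blows up exponentially in the localisation level. The idea is instead to transfer the exponential control of $Z$ into a \emph{polynomial} moment bound on $X$ under the candidate measure, exploiting that the jump kernel $\xi$ is deterministic and bounded (an Esscher-type tilt) and that $\nu$ has a fourth moment.

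First I would localise. Set $\tau_n:=\inf\{t\ge0:\|X(t)\|\ge n\}\wedge\bT$. On the stochastic interval $[0,\tau_n]$ the kernel $\phi$ is bounded (since $\|X\|\le n$ there) and $\xi$ is bounded by Assumption \ref{103}, so the stopped process $Z^{\tau_n}:=Z(\cdot\wedge\tau_n)$ satisfies the classical integrability conditions and is a uniformly integrable true martingale with $\bE[Z(\tau_n)]=1$; hence $\bQ_n:=Z(\tau_n)\,\bP$ is a probability measure, equivalent to $\bP$ on $\mathcal{F}_{\tau_n}$. Splitting $1=\bE[Z(\tau_n)]=\bE[Z(\tau_n)\mathbf{1}_{\{\tau_n=\bT\}}]+\bE[Z(\tau_n)\mathbf{1}_{\{\tau_n<\bT\}}]$ and using that the affine SDE for $X$ does not explode (so $\tau_n\uparrow\bT$ and $\{\tau_n=\bT\}\uparrow\Omega$ $\bP$-a.s.), monotone convergence applied to $Z(\tau_n)\mathbf{1}_{\{\tau_n=\bT\}}=Z(\bT)\mathbf{1}_{\{\tau_n=\bT\}}\uparrow Z(\bT)$ gives $\bE[Z(\tau_n)\mathbf{1}_{\{\tau_n=\bT\}}]\to\bE[Z(\bT)]$. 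It therefore remains to prove that the ``escape'' term vanishes, $\bE[Z(\tau_n)\mathbf{1}_{\{\tau_n<\bT\}}]\to0$.

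Rewriting this mass under $\bQ_n$ (the event $\{\tau_n<\bT\}\subset\{\sup_{t\le\tau_n}\|X(t)\|\ge n\}$ being $\mathcal{F}_{\tau_n}$-measurable) yields
$$
\bE\bigl[Z(\tau_n)\mathbf{1}_{\{\tau_n<\bT\}}\bigr]=\bQ_n(\tau_n<\bT)\le\bQ_n\Bigl(\sup_{t\le\tau_n}\|X(t)\|\ge n\Bigr)\le\frac{1}{n^2}\,\bE^{\bQ_n}\Bigl[\sup_{t\le\bT}\|X(t\wedge\tau_n)\|^2\Bigr].
$$
Under $\bQ_n$, Girsanov's theorem (as in Proposition \ref{60}, legitimate here because $Z^{\tau_n}$ is a genuine martingale) makes $W^\bQ$ a Brownian motion and turns $J$ into a martingale with Esscher-tilted Lévy measure $(1+\xi^\intercal(t)y)\,\nu(dy)$, so that the stopped state variable obeys the affine equation $dX=(\tilde\theta(t)+\tilde\Theta(t)X)\,dt+v\,dW^\bQ+z\,dJ^\bQ$ with $\tilde\Theta=\Theta+v\phi_1$ and $\tilde\theta=\theta+v\phi_0+zK\xi$, whose coefficients are bounded/integrable and independent of $n$.

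The main obstacle — and the place where the fourth-moment hypothesis enters — is the uniform estimate $\sup_n\bE^{\bQ_n}[\sup_{t\le\bT}\|X(t\wedge\tau_n)\|^2]\le C<\infty$. Here the boundedness of $\xi$ guarantees that the tilted measure $(1+\xi^\intercal y)\nu(dy)$ still integrates the powers of $\|y\|$ needed to apply Itô's formula to $\|X\|^2$ together with Gronwall's inequality and the Burkholder–Davis–Gundy/Doob inequalities; the assumption $\int_{\bR^k}\|y\|^4\,\nu(dy)<\infty$ is what secures finiteness of the relevant moments of the jump martingale $z\,dJ^\bQ$ (and of $X$ under both $\bP$ and $\bQ$) uniformly in the localisation level. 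Granting this bound, Chebyshev forces the escape term to $O(1/n^2)\to0$, whence $\bE[Z(\bT)]=1$ and $Z$ is the announced strictly positive true martingale. The delicate points to handle carefully are the non-explosion of $X$ under $\bP$ (so that $\tau_n\uparrow\bT$), the validity of Girsanov on each $[0,\tau_n]$, and above all the $n$-uniformity of the moment bound, which is precisely what the Esscher structure of $\xi$ and the fourth-moment assumption on $\nu$ are there to provide.
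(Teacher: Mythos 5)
Your proposal is correct and takes essentially the same route as the paper's proof in Appendix A: the same stopping times $\tau_n=\inf\{t\geq 0:\|X(t)\|\geq n\}$, the measures $\bQ^n$ built from the stopped density $Z^{\tau_n}$ (the paper justifies its uniform integrability via the L\'epingle--M\'emin criterion, which is the precise form of your ``classical integrability conditions''), Girsanov under $\bQ^n$, and a Doob/Gronwall argument giving an $n$-uniform bound on $\bE_{\bQ^n}\bigl[\sup_{t\in[0,\bT]}\|X(t)\|^2\bigr]$, with the fourth moment of $\nu$ absorbing the $\|y\|^4$-terms from the tilted jump compensator exactly as you indicate. Your only cosmetic deviations --- splitting off the escape mass $\bE\bigl[Z(\tau_n)\mathbbm{1}_{\{\tau_n<\bT\}}\bigr]$ via monotone convergence and applying Chebyshev at order $n^2$, where the paper directly estimates $\bE[Z(\bT)]\geq \bQ^n(\tau_n>\bT)\geq 1-C/n$ --- are equivalent reformulations of the same conclusion.
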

\begin{proof}
See Appendix A.
\end{proof}

%The proof is based on the technique developed in \cite{MR3267283} and employed in a similar application by \cite{bol}. Usually, the Lépingle-Mémin criterion \cite{MR489492} is sufficient, but here it is more difficult to apply. Besides, it requires stronger assumptions, that is, roughly speaking, the finiteness of exponential moments of $H$, whereas here it is sufficient that $H$ is square integrable. 

In the next theorem we state the martingale property of the stochastic exponential in the case we have no jumps in the dynamics, that is $\nu\equiv 0$, in an alternative way.
%Let us restrict our attention to Dol\'{e}ans-Dade exponential of stochastic integrals with respect to the Brownian motion $W$. 
%The martingale property of stochastic exponentials plays a fundamental role in many financial applications. 
The Novikov condition (\cite{novikov}) is a standard way to prove it in the case of continuous density process, i.e. $H(t)=\int_0^t \phi^\intercal(s) dW(s)$. However, applying the Novikov condition for our state-dependent $\phi$ will yield a valid measure change only for a restricted interval of time $t\in[0,\tau]$, where $\tau$ will depend on the parameters in the model (inspect the proof in Appendix B to see this). Thus, the Novikov condition may fail to validate a martingale measure for all times in question and we cannot conclude that the dynamics are arbitrage-free for $t\geq\tau$. Therefore, we apply a weaker Novikov-type criterion, which relies on asymptotic properties of Gaussian moments. We remark in passing that a nice account of the literature related to this problem can be found in the introduction of \cite{kallsen2002}.

\begin{theorem}\label{7} 
Assume that $\Theta:[0,\mathbb{T}]\to\bR^{m\times m}$ is a bounded measurable function satisfying (CP), $\theta:[0,\mathbb{T}]\to\bR^{m}$ is integrable and $v:[0,\mathbb{T}]\to\bR^{m\times k}$ is square-integrable. Let $\phi_0:[0,\mathbb{T}]\to\bR^k$, $\phi_1:[0,\mathbb{T}]\to\bR^{k\times m}$ be measurable fields such that $\phi_1(t)$ is bounded and $\int_0^{\mathbb{T}} \|\phi_0(s)\|^2 ds<\infty$.
Define the Girsanov kernel
$$
\phi(t)=\phi_1(t)X(t)+\phi_0(t),
$$
where $X$ is the unique solution of
$$
dX(t)=(\theta(t)+\Theta(t)X(t))dt+v(t)dW(t).
$$
Then
\begin{equation}\label{88}
Z(t)=\exp\left(\int_0^t \phi^\intercal(s)dW(s)-\frac12\int_0^t \|\phi(s)\|^2 ds \right)%=\exp\left(\int_0^t \phi^\intercal(s)dW(s)-\frac12\int_0^t \|\phi(s)\|^2 ds \right)
\end{equation}
is a (strictly positive) $\bP$-martingale on $[0,\mathbb{T}]$. 
\end{theorem}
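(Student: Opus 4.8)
The plan is to exploit that, in the absence of jumps ($\nu\equiv0$), $Z$ is a continuous nonnegative local martingale, hence a supermartingale, so that it is a true martingale on $[0,\bT]$ if and only if it preserves expectation, $\bE[Z(\bT)]=1$. The standard route would be Novikov's condition $\bE[\exp(\tfrac12\int_0^{\bT}\|\phi(s)\|^2\,ds)]<\infty$, but, as noted above, this can only be secured on a short initial interval $[0,\tau]$. The remedy is to partition $[0,\bT]$ into finitely many subintervals $0=t_0<t_1<\cdots<t_N=\bT$ and to invoke the weak (localized) Novikov criterion: $Z$ is a martingale on $[0,\bT]$ as soon as, on each subinterval,
$$\bE\left[\exp\left(\tfrac12\int_{t_{k-1}}^{t_k}\|\phi(s)\|^2\,ds\right)\right]<\infty,\qquad k=1,\ldots,N.$$
This reduces the global claim to a uniform local exponential-moment bound.

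The second ingredient is that $\phi$ is Gaussian. Since $\Theta$ satisfies (CP), Remark \ref{64} yields the explicit representation $X(t)=e^{\int_0^t\Theta(u)\,du}X(0)+\int_0^t e^{\int_s^t\Theta(u)\,du}\bigl(\theta(s)\,ds+v(s)\,dW(s)\bigr)$, so that the pair $(X,\phi)$, with $\phi(t)=\phi_1(t)X(t)+\phi_0(t)$, is jointly Gaussian with mean and covariance bounded on $[0,\bT]$ (using boundedness of $\Theta,\phi_1$ and integrability of $\theta,\phi_0,v$). Consequently $Q_k:=\int_{t_{k-1}}^{t_k}\|\phi(s)\|^2\,ds$ is a quadratic functional of a Gaussian field, and I would estimate its exponential moment through the series representation of the exponential, $\bE[\exp(aQ_k)]=\sum_{n\ge0}\frac{a^n}{n!}\bE[Q_k^n]$. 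The asymptotics of Gaussian moments furnish a factorial bound of the form $\bE[Q_k^n]\le n!\,\bigl(C(t_k-t_{k-1})\bigr)^n$, so the series converges—hence the exponential moment is finite—whenever $a\,C(t_k-t_{k-1})<1$, in particular for $a=\tfrac12$ once the interval is short.

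Combining the two steps, I would choose the mesh $\max_k(t_k-t_{k-1})$ small enough that $\tfrac12\,C(t_k-t_{k-1})<1$ on every subinterval; the finitely many local Novikov conditions then hold simultaneously and the weak criterion gives the martingale property on the whole of $[0,\bT]$. The main obstacle is the moment estimate in the second step: one must pin down the constant $C$ and, above all, show that it stays uniformly controlled even though $\phi$ on $[t_{k-1},t_k]$ involves the unbounded Gaussian value $X(t_{k-1})$, whose variance has accumulated over all of $[0,t_{k-1}]$. The key observation is that the weight multiplying $\|X(t_{k-1})\|^2$ inside $Q_k$ is proportional to the interval length, so that refining the mesh keeps the relevant threshold (governed by the largest eigenvalue of the Gaussian covariance) below $\tfrac12$; making this quantitative is the technical heart of the argument. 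This factorial—rather than faster—growth of the even moments is precisely the feature special to the Gaussian law, which is why, as remarked, the same series argument fails for general L\'{e}vy kernels.
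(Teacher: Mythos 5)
Your proposal is correct and follows essentially the same route as the paper's proof in Appendix B: the localized (weak) Novikov criterion on a partition with small mesh (the paper cites \cite[Corollary 5.14]{KaS:91}), the explicit Gaussian representation of $X$ via (CP), the series expansion of the exponential, and the factorial bound $\bE(\|Z\|^{2n})\leq C_4\, n\,\bE(\|Z\|^{2(n-1)})$ on Gaussian moments. Even the ``technical heart'' you flag is resolved exactly as you anticipate: writing the stochastic part as $U(t)\int_0^t U(s)^{-1}v(s)\,dW(s)$ makes the integrand deterministic with variance bounded uniformly on $[0,\bT]$, while Jensen's inequality produces the factor $(t_{k+1}-t_k)^{n}$ that the small mesh controls.
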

\begin{proof}
See Appendix B.
\end{proof}
%If we furtherly restrict our family of coefficients $\alpha$ and $\beta$  and study the case of \emph{deterministic} Girsanov's kernels (i.e. $\phi_1(t)\equiv 0$), then it is straightforward to prove the martingale property of \eqref{59} under reasonable assumptions. In other words, by restricting the range of models, we can find an equivalent martingale measure by assuming deterministic market price of risk. However, if we try to apply the same argument as in the proof of Theorem \ref{7} to the general case $\phi_1(t)\not=0$, then we are no more able to prove the result. 

We will now show that the same technique employed in the proof of Theorem \ref{7} fails when adding the jump component in the dynamics of $X$, even if the density process $Z$ is still continuous, i.e. $\xi\equiv0$. We need that the pure-jump Lévy component $J$ satisfies the following property related to the asymptotics of \emph{even} moments of stochastic integrals with respect to L\'{e}vy processes, which does not hold in very simple cases.% (see the proof in Appendix B and the following Examples). 
\begin{definition} We say that a L\'{e}vy process $J(t):=\int_0^t\int_{\bR^k} y \, \overline{N}(ds,dy)$ satisfies the property $(P)$ if, for any $t\in[0,\mathbb{T}]$ and bounded measurable function $h:[0,\mathbb{T}]\to\bR^{m\times k}$, it holds that
\begin{equation}
\limsup_{n\to\infty} \frac 1n\,\frac{\bE\left[\|\int_0^t\int_{\bR^k} h(s)\, y \, \overline{N}(ds,dy)\|^{2n}\right]}{\bE\left[\|\int_0^t\int_{\bR^k} h(s)\, y \, \overline{N}(ds,dy)\|^{2n-2}\right]}
\leq C,
\end{equation}
where $C=C(t,\mathbb{T},h)$ is a constant possibly depending on $t,\mathbb{T}$ and the function $h$.
\end{definition}
%We mention that \cite{MR3004556} provide bounds for L\'{e}vy moments, but since they aim to describe asymptotics for small time, no conclusions can be drawn about the dependence on the moment degree, which is actually what we need. 
Let us consider, for simplicity, the one dimensional case, i.e. $k=1$, and fix $h\equiv 1$, $t=\mathbb{T}=1$. Then, property (P) becomes
\begin{equation}\label{45}
\limsup_{n\to\infty} \frac 1n\,\frac{\bE\left[J(1)^{2n}\right]}{\bE\left[J(1)^{2n-2}\right]}
\leq C.
\end{equation}
Thus, it boils down to the asymptotic behavior of the moments of infinitely divisible distributions. In general, explicit formulas are available for moments of L\'{e}vy processes in terms of cumulants, but involve rather complicated combinatorial quantities, which do not allow to interpret in a straightforward way the growth behavior with respect to $n$ (see e.g. \cite{Privault2016} or \cite[Lemma 2.2,  Proposition 2.3]{MR3004556}). In the upcoming examples we show that, for very common distributions, the property in \eqref{45} is actually not verified. 
\begin{example}[Poisson process] 
The moments of the Poisson process $N(t)$ with intensity parameter $\lambda$ can be expressed in terms of the so-called \emph{Touchard polynomials}:
$$
\bE[N(1)^n]=T_n(\lambda)=\sum_{k=0}^n \begin{Bmatrix} n\\k\end{Bmatrix} \lambda^k,
$$
where $\begin{Bmatrix} n\\k\end{Bmatrix}$ is the Stirling number of second kind, namely the number of ways to partition a set of $n$ labelled objects into $k$ nonempty unlabelled subsets. Observe that here the cumulants are represented by $\lambda$. If we take, for instance, $\lambda$ to be equal to 1, then  
$$
\bE[N(1)^n]=T_n(1)=\sum_{k=0}^n \begin{Bmatrix} n\\k\end{Bmatrix}=:B_n,
$$
where we denote by $B_n$ the $n$-th Bell number (cf. \cite{riordan}). By computing  the ratio in \eqref{45} numerically, it turns out that the growth of such Bell numbers is actually faster, in the sense that
$$
\limsup_{n\to\infty} \frac 1n\, \frac{\bE\left[N(1)^{2n}\right]}{\bE\left[N(1)^{2n-2}\right]}=\limsup_{n\to\infty} \frac 1n\, \frac{B_{2n}}{B_{2n-2}}=\infty.
$$
\end{example}

\begin{example}[Gamma and Poisson subordinators]
A popular way to construct L\'{e}vy-based models in financial modeling is Brownian subordination (cf. e.g. \cite[Section 4.4]{MR2042661}). This consists in changing the time of a Brownian motion by an independent subordinator, i.e. a increasing L\'{e}vy process. % (see \cite[Proposition 3.10]{MR2042661} for a characterization). 
Let $W$ be a Brownian motion and $U$ a subordinator, mutually independent. Then, a classical result is that $L(t):=W(U(t))$ is a L\'{e}vy process. The moments of $L$ can be computed in terms of the ones of $W$ and $U$ by the following:
$$
\bE[L(t)^{2n}]=\bE[W(U(t))^{2n}]=\bE[\bE[W(U(t))^{2n}|U]]
=\bE[U(t)^{n}]\bE[W(1)^{2n}].
$$
Thus, 
$$
\frac{\bE\left[L(1)^{2n}\right]}{\bE\left[L(1)^{2n-2}\right]}= \frac{\bE[U(1)^{n}]\,\bE[W(1)^{2n}]}{\bE[U(1)^{n-1}]\,\bE[W(1)^{2n-2}]}\leq
 C_0\,n\, \frac{\bE[U(1)^{n}]}{\bE[U(1)^{n-1}]},
$$
thanks to the properties of Gaussian moments. Consequently, the asymptotic condition in \eqref{45} boils down to the following requirement for the moments of the subordinator:  
\begin{equation}\label{51}
\limsup_{n\to\infty} \frac{\bE[U(1)^{n}]}{\bE[U(1)^{n-1}]}
\leq C.
\end{equation}
However, if $U$ is a Poisson subordinator, then (see previous example) 
$$
\frac{\bE[U(1)^{n}]}{\bE[U(1)^{n-1}]}=\frac{B_n}{B_{n-1}},
$$
which numerically turns out to be unbounded as $n$ tends to infinity. Also, if $U$ is a Gamma subordinator, i.e. $U(1)$ is Gamma distributed, then 
$$
\frac{\bE[U(1)^{n}]}{\bE[U(1)^{n-1}]}\approx C_1\frac{n!}{\beta^{n}}\frac{\beta^{n-1}}{(n-1)!}\approx C_2\, n,
$$
so that, even in this case, property (P) is not satisfied.
\end{example}
%We have seen that these examples suggest us to search for an alternative way to prove that 
%$$
%Z(t)=\exp\left(\int_0^t \phi^\intercal(s) dW(s)-\frac12\int_0^t \|\phi(s)\|^2 ds \right)
%$$
%is a martingale, where the Girsanov kernel takes the form $\phi(s)=\phi_1(s)X(s)+\phi_0(s)$ and $X$ evolves as
%$$
%dX(t)=(\theta(t)+\Theta(t)X(t))dt+v(t)dW(t)+z(t)\int_{\bR^k}y\overline N(dt,dy).
%$$

\subsection{Risk premium}

We close this section by discussing very briefly the \emph{risk premium}, which  represents a relevant quantity in commodity markets (see e.g. \cite{geman2005}). It is defined as the difference between the forward price and the spot price prediction at delivery, which means, in mathematical terms, 
\begin{equation}\label{75}
\mbox{RP}^f(t,T)=f(t,T)-\bE[f(T,T)|\mathcal{F}_t],
\end{equation}
with the spot price being $S(T):=f(T,T)$. As $f(t,T)$ is a vector, the risk premium is defined as a vector as well. Observe that, under the assumptions of Theorem \ref{85}, we can also write for a martingale measure $\bQ$: 
\begin{equation}\label{73}
\mbox{RP}^f(t,T)=\bE^\bQ[f(T,T)|\mathcal{F}_t]-\bE[f(T,T)|\mathcal{F}_t].
\end{equation}
The economic interpretation of the risk premium is particularly important and has been studied both empirically and theoretically in the context of energy markets by several authors: see, for instance, \cite{Cartea2008829} for the UK gas market, \cite{MR2870526,bck,bessembinder} for power markets and \cite{ronn2008} for various energy exchanges. As our present purpose is just to identify it in our model, we refer to the literature above for a more detailed description of the financial and mathematical features of the risk premium. 

It is natural to extend the definition of the risk premium to swaps as the difference between the swap price and the expected value of the spot price \emph{weighted} over the delivery period:
\begin{equation}\label{74}
\mbox{RP}^F(t,T_1,T_2)=F(t,T_1,T_2)-\bE\left[\left.\int_{T_1}^{T_2} \widehat{w}(T,T_1,T_2) f(T,T)\,dT\right|\mathcal{F}_t\right].
\end{equation}
Let us omit the mathematical justifications in order to keep this discussion at a simple level. As a direct consequence of \eqref{75} and \eqref{74} (cf. also \cite{MR2870526}), we have the following relation:
$$
\mbox{RP}^F(t,T_1,T_2)=\int_{T_1}^{T_2} \widehat{w}(T,T_1,T_2) \mbox{RP}^f(t,T)\,dT.
$$
This means that the risk premium of a swap is the weighted average of the forward risk premium $\mbox{RP}^f(t,T)$ over the delivery period $[T_1,T_2]$.
By writing \eqref{13} in integral form, we have
\begin{align*}
\mbox{RP}^f(t,T)&=\int_t^T \lambda(s) \bE[f(s,T)|\mathcal{F}_t]\, ds-\int_t^T c(s,T)\, ds,%\\
%\mbox{RP}^F(t,T_1,T_2)&=\int_{T_1}^{T_2} \widehat{w}(T,T_1,T_2)\lambda(s) \bE[f(s,T)|\mathcal{F}_t] \,dT
%- \int_{T_1}^{T_2} \widehat{w}(T,T_1,T_2)\left(\int_t^T c(s,T)\, ds\right) \,dT.\\
\end{align*}
which becomes, after using \eqref{95},
\begin{align*}
\mbox{RP}^f(t,T)&=(I_n-e^{-\int_t^T \lambda(u) du})f(t,T)
-\int_t^T e^{-\int_s^T \lambda(u) du} c(s,T)\, ds,
\end{align*}
Finally, since $\widehat{w}$ commutes with all the matrices of the same size, the swap risk premium can be written as
\begin{align*}
%\mbox{RP}^f(t,T)&=(I_n-e^{-\int_t^T \lambda(u) du})f(t,T)
%-\int_t^T e^{-\int_s^T \lambda(u) du} c(s,T)\, ds,\\
\mbox{RP}^F(t,T_1,T_2)&=\int_{T_1}^{T_2} (I_n-e^{-\int_t^T \lambda(u) du})\, \widehat{w}(T,T_1,T_2)\,f(t,T)\,dT\\
&\qquad-\int_{T_1}^{T_2} \widehat{w}(T,T_1,T_2) \int_t^T e^{-\int_s^T \lambda(u) du} c(s,T)\, ds\, dT.
\end{align*}

In commodity markets, especially power, the behavior of the observed risk premium is rather involved (see e.g. \cite{weron2008market,bck}). Therefore, in order to suitably describe these stylized facts, a stochastic specification has been postulated. To our knowledge, several models studied in literature, with (at least) the exceptions of \cite{bol,scotti}, imply a deterministic structure of risk premium. Indeed, what we have in the present modeling framework is a stochastically sign-changing (in each component) and affine risk premium.

\section{A Generalization of the Lucia-Schwartz Model}

A classical approach in the literature of energy markets consists in modeling directly the spot price dynamics: see, for instance, \cite{MR2323278,carteaF,gemanR,Lucia2002} and many others. %In fact, in these markets it is not always natural to say what can be considered as the spot price. For instance, in electricity markets it is common to take daily averaged prices from the day-ahead market. However, this resembles more a forward than a spot market, since generally the delivery of power is, at the earliest, the day after negotiations. To be more precise, in power markets electricity is actually traded in the form of swaps, with delivery over block of hours. 
This line of research has some advantages: e.g. in order to represent the forward price it is sufficient to take the discounted expected value of the spot under any equivalent probability measure, called the \emph{pricing measure}. In this section we see a  one-dimensional example of spot price dynamics which %, through the procedure cited above, 
gives forward prices of type \eqref{4}. We start with the original Lucia-Schwartz model and then present a generalization based on a finer representation of the volatility term structure.

The two-factor Lucia-Schwartz \cite{Lucia2002} model consists of two state variables and a seasonal component. If $S(t)$ denotes the spot price at time $t$, then $S(t)=s(t)+X_1(t)+X_2(t)$, with $s(t)$ seasonal deterministic component and 
\begin{align}\label{56}
dX_1(t)&=-\kappa X_1(t)\, dt+ v_1\, dW_1^\bQ(t),
 \\ \label{57}
dX_2(t)&= \mu\, dt + v_2\, dW_2^\bQ(t),
\end{align} 
where $W_1^\bQ$ and $W_2^\bQ$ are two Brownian motions under a pricing measure $\bQ$. Then, we have, for a time $t$ and a maturity date $T\geq t$, 
\begin{equation} \label{spotLS}
S(T)=s(T)+e^{-\kappa (T-t)} X_1(t)+v_1 \int_t^T e^{-\kappa (u-t)} dW_1^\bQ(u)+X_2(t)+\mu(T-t)+v_2(W_2^\bQ(T)-W_2^\bQ(t)).
\end{equation}
Denoting $X(t)=(X_1(t),X_2(t))^\intercal$, the value of a forward contract at time $t$ with delivery at time $T$ is given by
$$
f(t,T)=\bE_\bQ[S(T)|\mathcal{F}_t]=\alpha(t,T)X(t)+\beta(t,T),
$$ 
with
$$
\alpha(t,T)=(e^{-\kappa (T-t)},1) ,\qquad \beta(t,T)=s(T)+\mu (T-t).
$$
By comparing these identities to \eqref{6} and \eqref{8}, we see that this corresponds in our model to
$$
\gamma_0(t)\equiv\gamma_0:=\begin{pmatrix}
0 \\ \mu
\end{pmatrix} ,\qquad 
\gamma_1(t)\equiv\gamma_1:=\begin{pmatrix}
-\kappa & 0 \\
0 & 0\\
\end{pmatrix}.
$$
%The dynamics of $X$ that we observe under the real-world probability $\bP$ is
%\begin{align*}
%dX_1(t)&=(\theta_1(t)+\Theta_{11}(t) X_1(t))\, dt+ v_1\, dW_1(t),
% \\ 
%dX_2(t)&= (\theta_2(t)+\Theta_{22}(t) X_2(t))\, dt + v_2\, dW_2(t).
%\end{align*} 
The relations among the $\bP$-dynamics of $X$ in \eqref{89} and the Girsanov kernel of $\bQ$ are embedded in \eqref{9} and \eqref{10}. In this model, since
$$
v(t)\equiv v:=\begin{pmatrix}
v_1 & 0 \\
0 & v_2\\
\end{pmatrix},
$$
we have that
\begin{align}
v \phi_1(t)&=\gamma_1-\Theta(t), \\
v \phi_0(t)&=\gamma_0-\gamma_1\theta(t),
\end{align}
%\begin{align}
%\begin{pmatrix}
%v_1 & 0 \\
%0 & v_2\\
%\end{pmatrix} \phi_1(t)&=\begin{pmatrix}
%-\kappa & 0 \\
%0 & 0\\
%\end{pmatrix}-\Theta(t), \\
%\begin{pmatrix}
%v_1 & 0 \\
%0 & v_2\\
%\end{pmatrix} \phi_0(t)&=\begin{pmatrix}
%0 \\ \mu
%\end{pmatrix}-\begin{pmatrix}
%-\kappa & 0 \\
%0 & 0\\
%\end{pmatrix}\theta(t),
%\end{align}
where $\phi_1(t),\Theta(t)$ are $2\times 2$ matrices and $\phi_0(t),\theta(t)$ are two-dimensional column vectors. By writing them down explicitly, we have
$$
\phi_1(t)=
\begin{pmatrix}
-\frac{\kappa+\Theta_{11}(t)}{v_1} & -\frac{\Theta_{12}(t)}{v_1}\\
& \\
-\frac{\Theta_{21}(t)}{v_2} & -\frac{\Theta_{22}(t)}{v_2}\\
\end{pmatrix},
\quad
\quad
\phi_0(t)=\begin{pmatrix}
\kappa\,\theta_1(t)/v_1 \\ \mu/v_2
\end{pmatrix}.
$$
Then, the coefficients of the $\bP$-dynamics of $f(t,T)$ are determined by \eqref{23}--\eqref{25}, which yield
\begin{align*}
\lambda(t)\alpha(t,T)&=\alpha(t,T)(\gamma_1(t)-\Theta(t))
=\alpha(t,T)v(t)\phi_1(t),
\\
c(t,T)&=\lambda(t)\beta(t,T)+\alpha(t,T)(\theta(t)-\gamma_0(t)),
\\
\sigma(t,T)&=\alpha(t,T)v(t).
\end{align*}
We immediately observe that $\alpha(t,T)$ admits a right inverse for each $T$ and there are infinitely many $\lambda(t)$ satisfying the first relation. For instance, we could take (by assuming that $\Theta_{12}=\Theta_{21}=0$)
\begin{align*}
\lambda(t)&=-\kappa-\Theta_{11}(t),
\\
c(t,T)&=\lambda(t)(s(T)+\mu(T-t))+
\theta_1(t) e^{-\kappa(T-t)}+\theta_2(t)-\mu,
\\
\sigma(t,T)&=(e^{-\kappa(T-t)}v_1,v_2).
\end{align*}

In view of this analysis, we now present a two-factor model which can be seen as a generalization of \cite{Lucia2002}, where we add a term structure for the volatility. An empirical study of this model has been performed in a companion paper by \cite{latini}. 

By reformulating \eqref{56} and \eqref{57}, we define %{\bf I am confused, reformulating what????, do you mean the above general equations? We should also be more clear that we now propose a slight extension of LS, built on the fact that $v_2$ is time-dependent. The whole generalization is based on $X_2/v_2(t)$ being BM, right?} { [MULTIPLE LABEL ERROR, NOW IT SHOULD BE READABLE!]}
\begin{align}\label{56N}
dX_1(t)&=-\kappa X_1(t)\, dt+ v_1\, dW_1^\bQ(t),
 \\ \label{57N}
dX_2(t)&= \frac{v'_2(t)}{v_2(t)} X_2(t)\, dt + v_2(t)\, dW_2^\bQ(t),
\end{align} 
where $v_2(t)$ is now a differentiable function of time such that $v_2(t) > 0$ for all $t$ and that $\frac{v_2'(t)}{v_2(t)} \neq - \kappa$ (otherwise, the model collapses into a 1-factor model). In other words, $X_1$ is a mean-reverting Ornstein-Uhlenbeck process, while $X_2(t)/v_2(t)$ can be shown to be a Brownian motion. While in the Lucia-Schwartz model $X_2$ is a Brownian motion with drift as $v_2$ is constant, here the variance of $X_2$ is varying in time, which allows to have seasonality in the volatility of $S$ as well as in the price level. Thus we can regard $v_2$ as the seasonality factor for the volatility and $s$ as the seasonality factor for the price. By computations analogous to those before, we derive that
$$
S(T)=s(T)+e^{-\kappa (T-t)} X_1(t)+v_1 \int_t^T e^{-\kappa (u-t)} dW_1^\bQ(u)+\frac{v_2(T)}{v_2(t)} X_2(t)+v_2(T)(W_2^\bQ(T)-W_2^\bQ(t)).
$$
Notice that, within this new formulation, the processes $X_1$ and $X_2$ are stationary in mean (i.e., $\bE[X_1(t)] = X_1(0)$ and $\bE[X_2(t)] = X_2(0)$ for all $t$). Thus, if we start with $X_1(0) = X_2(0) = 0$, then $s(T)$ is also the expectation of $S(T)$ (under the risk-neutral measure $\bQ$). 

The value at time $t$ of a forward contract with delivery at time $T$ is 
$$
f(t,T)=\bE_\bQ[S(T)|\mathcal{F}_t]=\alpha(t,T)X(t)+\beta(t,T),
$$ 
where now
$$
\alpha(t,T)= \left(e^{-\kappa (T-t)},\frac{v_2(T)}{v_2(t)}\right) ,\qquad \beta(t,T)=s(T).
$$
This implies
\begin{equation} \label{fLS} 
df(t,T) = e^{- \kappa (T-t)} \sigma_1\ dW^\bQ_1(t) + \sigma_2(T)\ dW^\bQ_2(t),
\end{equation}
so that, in particular, $ \sigma(t,T) = (e^{- \kappa (T-t)} \sigma_1, \sigma_2(T))$. This model allows for the instantaneous forward contracts $f(t,T)$ to have a term structure of the volatility which accounts both for the Samuelson effect (volatility increasing as $t \to T$) in the term $e^{- \kappa (T-t)} \sigma_1$, as well as a potentially complex seasonality with respect to the absolute maturity in the term $\sigma_2(T)$. 
From this, by specifying a choice of the weight function $\hat w$ and using \eqref{21}, we can obtain the dynamics of $F(t,T_1,T_2)$ for all $(t,T_1,T_2) \in \mathcal{A}^{\mathbb{T}}_2$. For example, if $\hat w \equiv \frac{1}{T_2-T_1}$, we have that 
$$ dF(t,T_1,T_2) = - \frac{e^{- \kappa (T_2-t)} - e^{- \kappa (T_1-t)}}{\kappa(T_2 - T_1)} \sigma_1\ dW^\bQ_1(t) + \frac{\int_{T_1}^{T_2} \sigma_2(T)\ dT}{T_2 - T_1} \ dW^\bQ_2(t). $$

If we want that, under the empirical measure, $f$ follows a mean-reverting process as in \eqref{13}, we have to relate   
%It is immediate to see that $\sigma(t,T)
% = (\sigma_1 e^{- \kappa (T-t)},\sigma_2(T))
%$ satisfies Assumption \ref{1} and that $\alpha$ and $\beta$ satisfy Assumption \ref{18}. 
the coefficients of $X$ in its $\bP$-dynamics \eqref{89} to \eqref{23}--\eqref{25}. We already know that, both under $\bP$ as under $\bQ$, we have
$$
v(t)=\begin{pmatrix}
v_1 & 0 \\
0 & v_2(t)\\
\end{pmatrix}. 
$$
%and $\sigma = \alpha v$, as in \eqref{25}, was already obtained above. 
%In order to search for the drift $b$ as in \eqref{31}--\eqref{31c}, we notice that in \eqref{23}--\eqref{24} we have many degrees of freedom. 
Similarly to what we did before for the Lucia-Schwartz model, we can make a parsimonious choice and put $\theta(t) \equiv 0$ and $\Theta_{12}(t) = \Theta_{21}(t) \equiv 0$. This gives us 
$$ c(t,T) = \lambda(t) s(T) $$
and
$$
\Theta(t)=\begin{pmatrix}
\Theta_{11}(t) & 0 \\
0 & \Theta_{22}(t)\\
\end{pmatrix}  = \begin{pmatrix}
- \kappa - \lambda(t) & 0 \\
0 & \frac{v'_2(t)}{v_2(t)} - \lambda(t)\\
\end{pmatrix}.
$$
Regarding the market price of risk 
%as given by \eqref{110}, 
%we follow Proposition \ref{60}. First of all, we notice that $\alpha$ and $\beta$ satisfy  \eqref{6}--\eqref{8}, with
we observe that
$$
\gamma_0(t)\equiv0=\begin{pmatrix}
0 \\ 0
\end{pmatrix} ,\qquad 
\gamma_1(t)=\begin{pmatrix}
-\kappa & 0 \\
0 & \frac{v'_2(t)}{v_2(t)}\\
\end{pmatrix}
$$
and 
%\begin{pmatrix}
%v_1 & 0 \\
%0 & v_2(t)\\
%\end{pmatrix}
%
$$ v(t) \phi_1(t) = \gamma_1(t) - \Theta(t) = \lambda(t) I_2,$$
where $I_2$ is the 2 $\times$ 2 identity matrix. This, together with \eqref{10}, gives 
$$ \phi_1(t) = \left( \frac{\lambda(t)}{v_1}, \frac{\lambda(t)}{v_2(t)} \right), \qquad \phi_0(t) \equiv 0. $$
We now follow Theorem \ref{7} to determine if the process $\phi$ obtained above gives a martingale. By the sufficient conditions there, if we impose that $\Theta$ is bounded, i.e. that $\lambda$ and $\frac{v'_2}{v_2}$ are bounded, and that $\phi_1$ is bounded, i.e. that $\frac{\lambda}{v_2}$ is bounded, then $Z$ is a martingale, and $\bQ$ is an equivalent martingale measure. Thus, the $\bQ$-dynamics in \eqref{fLS} corresponds to a $\bP$-dynamics
\begin{equation} \label{FLS} 
df(t,T) = \lambda(t)( s(T) - f(t,T))\ dt + e^{- \kappa (T-t)} \sigma_1\ dW_1(t) + \sigma_2(T)\ dW_2(t) 
\end{equation}
and, if for example $\hat w \equiv \frac{1}{T_2-T_1}$, we also have that 
\begin{align}
dF(t,T_1,T_2) = \quad & \lambda(t) \left( \frac{\int_{T_1}^{T_2} s(T)\ dT}{T_2 - T_1} - F(t,T_1,T_2)\right)\ dt \\
& - \frac{e^{- \kappa (T_2-t)} - e^{- \kappa (T_1-t)}}{\kappa(T_2 - T_1)} \sigma_1\ dW_1(t) + \frac{\int_{T_1}^{T_2} \sigma_2(T)\ dT}{T_2 - T_1} \ dW_2(t).
\end{align}

From this model, we can recover various stylized facts typical of energy markets.
\begin{itemize}
\item The dynamics of forward prices $f(\cdot,T)$ under the real-world probability measure $\bP$ is mean-reverting. The mean-reversion speed $\lambda(t)$ can be time-dependent (but not maturity-dependent in this formulation), and the long-term mean $s(T)$ is exactly the seasonal component of the spot price $S$. The same can be said about the swap dynamics $F(\cdot,T_1,T_2)$, where this time the long-term mean is the maturity-average of the long-term mean $s(T)$, $T \in [T_1,T_2]$. 
\item We have a generalized Samuelson effect in the forward prices $f(\cdot,T)$, which is quite evident, and also in the swap prices $F(\cdot,T_1,T_2)$. About this latter, denote the diffusion vector of $F(t,T_1,T_2)$ as $\Sigma(t,T_1,T_2) = (e^{\kappa t} \Gamma(T_1,T_2), \Psi(T_1,T_2))$, with
\begin{eqnarray*}
\Gamma(T_1,T_2) & := & \frac{1}{T_2-T_1} \int_{T_1}^{T_2} \sigma_1 e^{-\kappa u} \, du = 
\frac{\sigma_1(e^{-\kappa T_1}-e^{-\kappa T_2})}{\kappa (T_2-T_1)}, \\
\Psi(T_1,T_2) & := & \frac{1}{T_2-T_1} \int_{T_1}^{T_2} \sigma_2(u) \, du.
\end{eqnarray*}
then the function $t \to \|\Sigma(t,T_1,T_2)\|$ is increasing in $t$: as time to maturity decreases, the volatility increases. 
\item Swap prices with shorter delivery periods are more volatile than swap prices with longer delivery periods. In fact, $|\Gamma(T_1,T_2)|$ is decreasing in the second variable. This means that, for all $T_1 < T_2 < T_3$, we have $|\Gamma(T_1,T_2)| > |\Gamma(T_1,T_3)|$. For instance, being equal the time to maturity, monthly contracts are more volatile than quarters (lasting 3 months) or calendars (lasting 1 year). This is in line with empirical findings, see e.g. \cite{Benth20081116}, where the authors perform an empirical analysis of electricity contracts traded on Nord Pool.
\end{itemize} 

In conclusion, here we have shown that a natural extension of Lucia and Schwartz could be to allow for a specific time-dependent speed of mean-reversion in the second factor, such that $X_2/v_2(t)$ is a Brownian motion, as well as for a time-dependent speed of mean reversion of forward prices. Also, we worked out the ingredients for this model in our framework, so that the derived dynamics is arbitrage free.

\section{A Two-Commodity Cointegrated Market}

Cointegration is a well-known concept in econometrics and indicates a phenomenon observed in several energy-related markets (see, for instance, \cite{alexander1999correlation,de2009cointegration}). Let us introduce an arbitrage-free cross-commodity model with mean-reversion, which accounts for cointegrated price movements. We present it in the case of two commodities for the sake of simplicity, as generalizations are straightforward. We are inspired by the spot price model presented in \cite{MR3630550}. After specifying the spot price dynamics under a martingale measure $\bQ$, we will derive forward prices by the conditional expectation. 
%Consequently, the forward prices are $\bQ$-martingales by definition and thus arbitrage-free. 
The focus in this example is to explore the possibility to have mean-reverting cointegration, which will be shown to lead to some interesting model restrictions.

Let $S_k(t)$ denote the spot price at time $t$ of commodity $k$ for $k=1,2$ and set 
\begin{equation}\label{79}
S_k(t)=s_k(t)+Y_k(t)+a_k L(t),
\end{equation}
with $s_k(t)$ seasonal deterministic component, $a_k\in\bR\setminus\{0\}$ for $k=1,2$. We assume that, under a martingale measure $\bQ$, the dynamics of these factors are 
\begin{align*}
dL(t)&=\sigma dW(t) + \psi dJ(t),
\\ 
dY_1(t)&=-\mu_1 Y_1(t)\, dt+ \sigma_1\, dW_1(t)+ \psi_1 dJ_1(t),
 \\ 
dY_2(t)&= -\mu_2 Y_2(t)\, dt + \sigma_2\, dW_2(t)+ \psi_2 dJ_2(t),
\end{align*} 
where $\sigma,\psi,\mu_k,\sigma_k,\psi_k\in\bR$ for $k=1,2$. The process $(W,W_1,W_2)$ is a three-dimensional Brownian motion independent from $(J,J_1,J_2)$, which is a three-dimensional zero-mean pure-jump Lévy process with independent components. The Lévy measures associated are denoted by $\nu,\nu_1$ and $\nu_2$ and satisfy the integrability assumptions in Theorem \ref{85}. Since $S_2(t)/a_2-S_1(t)/a_1=s_2(t)/a_2-s_1(t)/a_1+Y_2(t)/a_2-Y_1(t)/a_1$, we say that $S_1$ and $S_2$ are cointegrated around the seasonality function $s_2(t)/a_2-s_1(t)/a_1$ (cf. \cite{MR3630550}).

Consistently with the notation in Section 3, we have that $\kappa=\int_\bR y^2\,\nu(dy)$ and $\kappa_i=\int_\bR y^2\,\nu_i(dy)$, $i=1,2$, and 
\begin{align*}
v(t)\equiv v:=\begin{pmatrix}
\sigma &0 & 0\\
0 & \sigma_1 & 0 \\
0 & 0 & \sigma_2 \\
\end{pmatrix},\quad
z(t)\equiv z:=\begin{pmatrix}
\psi &0 & 0\\
0 & \psi_1 & 0 \\
0 & 0 & \psi_2 \\
\end{pmatrix},\quad
K=\begin{pmatrix}
\kappa &0 & 0\\
0 & \kappa_1 & 0 \\
0 & 0 & \kappa_2 \\
\end{pmatrix}.
\end{align*}
Then, the value of a forward contract for the commodity $k$ at time $t$ with delivery at time $T$ is given by the conditional expectation
$$
f_k(t,T)=\bE_\bQ[S_k(T)|\mathcal{F}_t]=s_k(T)+e^{-\mu_k (T-t)} Y_k(t)+a_k L(t).
$$
We denote $S(t)=(S_1(t),S_2(t))^\intercal$, $f(t,T)=(f_1(t,T),f_2(t,T))^\intercal$, $X(t)=(L(t),Y_1(t),Y_2(t))^\intercal$, so that
$$
f(t,T)=\bE_\bQ[S(T)|\mathcal{F}_t]=\alpha(t,T)X(t)+\beta(t,T),
$$ 
with
\begin{align}\label{106}
\alpha(t,T)=\begin{pmatrix}
a_1 & e^{-\mu_1 (T-t)} & 0\\
a_2 & 0 & e^{-\mu_2 (T-t)}
\end{pmatrix},
\qquad
\beta(t,T)=\begin{pmatrix}
s_1(T)\\ 
s_2(T)
\end{pmatrix}.
\end{align}
Besides, by comparing \eqref{106} to \eqref{6} and \eqref{8}, we have
$$
\gamma_0(t)\equiv\begin{pmatrix}
0 \\ 0
\end{pmatrix} ,\qquad 
\gamma_1(t)\equiv\gamma_1:=\begin{pmatrix}
0 &0 & 0\\
0 & -\mu_1 & 0 \\
0 & 0 & -\mu_2 \\
\end{pmatrix}.
$$
Following Proposition \ref{60} and Theorem \ref{85}, the martingale measures  $\bQ$ are determined by the kernels $\phi(t)=\phi_1(t)X(t)+\phi_0(t)$ and $\xi(t)=\xi_0(t)$, where $\phi_1,\phi_0$ and $\xi$ satisfy
\begin{align*}
\gamma_1&=v\phi_1(t)+\Theta(t),\\ 
0&=v \phi_0(t)+z K\xi(t)+\theta(t).
\end{align*}
The deterministic functions $\phi_1(t),\Theta(t)$ are $3\times 3$ matrices and $\phi_0(t),\xi(t),\theta(t)$ are three-dimensional column vectors. The coefficients of the $\bP$-dynamics of $f(t,T)$ are determined by
\begin{align*}
\lambda(t)\alpha(t,T)=\alpha(t,T)(\gamma_1-\Theta(t))
&=\alpha(t,T)v \phi_1(t),
\\
c(t,T)=\lambda(t)\beta(t,T)+\alpha(t,T)\theta(t)&=\lambda(t)\beta(t,T)-\alpha(t,T)(v\phi_0(t)+z K\xi(t)),
\\
\sigma(t,T)&=\alpha(t,T)v
\\
\psi(t,T)&=\alpha(t,T)z.
\end{align*}
In particular, in view of Proposition \ref{27}, we must verify under which conditions it is possible to define a $2\times 2$ matrix 
$$
\lambda(t):=\begin{pmatrix}
\lambda_{11}(t) & \lambda_{12}(t) \\
\lambda_{21}(t) & \lambda_{22}(t)\\
\end{pmatrix},
$$ 
satisfying (CP) (Definition \ref{17}), such that, independently from $T$, 
\begin{equation}\label{107}
\lambda(t)\alpha(t,T)=\alpha(t,T)M(t),
\end{equation}
where we denote $M(t)=v \phi_1(t)$.
%If we use the notation
%$$
%\widetilde\phi_1(t)=\begin{pmatrix}
%m_{11}(t) & m_{12}(t) & m_{13}(t) \\
%m_{21}(t) & m_{22}(t)& m_{23}(t)\\
%m_{31}(t) & m_{32}(t)& m_{33}(t)\\
%\end{pmatrix}
%$$
%for $m_{ij}(t)\in\bR$, 
From \eqref{107} we get the following system of equations:
\begin{align*}
a_1 \lambda_{11}(t)+a_2 \lambda_{12}(t) &= a_1 M_{11}(t) + e^{-\mu_1(T-t)} M_{21}(t),\\
e^{-\mu_1(T-t)} \lambda_{11}(t) &= a_1 M_{12}(t) + e^{-\mu_1(T-t)} M_{22}(t),\\
e^{-\mu_2(T-t)} \lambda_{12}(t) &= a_1 M_{13}(t) + e^{-\mu_1(T-t)} M_{23}(t),\\
a_1 \lambda_{21}(t)+a_2 \lambda_{22}(t) &= a_2 M_{11}(t) + e^{-\mu_2(T-t)} M_{31}(t),\\
e^{-\mu_1(T-t)} \lambda_{21}(t) &= a_2 M_{12}(t) + e^{-\mu_2(T-t)} M_{32}(t),\\
e^{-\mu_2(T-t)} \lambda_{22}(t) &= a_2 M_{13}(t) + e^{-\mu_2(T-t)} M_{33}(t),
\end{align*}
which admits a unique solution if and only if
\begin{equation}\label{107bis}
\mu_1=\mu_2
\end{equation}
and
\begin{equation}\label{80}
\lambda_{11}(t)+\frac{a_2}{a_1} \lambda_{12}(t) = \frac{a_1}{a_2} \lambda_{21}(t)+\lambda_{22}(t).
\end{equation}
%For instance, if $a_1=a_2=1$, the last condition becomes
%$$
%\lambda_{11}(t)+\lambda_{12}(t) = \lambda_{21}(t)+\lambda_{22}(t).
%$$
Furthermore, for the market price of risk parameters we have 
$$
\phi_1(t)=\begin{pmatrix}
\frac{a_1\lambda_{11}(t)+a_2 \lambda_{12}(t)}{a_1 v}  & 0 & 0 \\
0 & \frac{\lambda_{11}(t)}{v_1}& \frac{\lambda_{12}(t)}{v_1}\\
0 & \frac{\lambda_{21}(t)}{v_2}& \frac{\lambda_{22}(t)}{v_2}\\
\end{pmatrix},
$$
while $\phi_0(t)$ and $\xi(t)$ must satisfy
$$
v \phi_0(t)+z K\xi(t)=-\theta(t).
$$

Consequently, if we start with a spot dynamics as in \eqref{79}, with $\mu:=\mu_1=\mu_2$ and a matrix $\lambda$ satisfying the condition in \eqref{80} and (CP), then we can build the following mean-reverting, arbitrage-free, cointegrated $\bP$-dynamics for the two forwards:
\begin{eqnarray*}
df_1(t,T)&=&(c_1(t,T)-\lambda_{11}(t) f_1(t,T)-\lambda_{12}(t) f_2(t,T))\, dt\\
&&+a_1 \sigma dW(t)+a_1 \psi dJ(t)+e^{-\mu (T-t)} \sigma_1 dW_1(t)+e^{-\mu (T-t)} \psi_1 dJ_1(t),\\
df_2(t,T)&=&(c_2(t,T)-\lambda_{21}(t) f_1(t,T)-\lambda_{22}(t) f_2(t,T))\, dt\\
&&+a_2 \sigma dW(t)+a_2 \psi dJ(t)+e^{-\mu (T-t)} \sigma_2 dW_2(t)+e^{-\mu (T-t)} \psi_2 dJ_2(t).
\end{eqnarray*}
These features are naturally inherited by the swap contracts $F(t,T_1,T_2)$:
\begin{eqnarray*}
dF_1(t,T_1,T_2)&=&(C_1(t,T_1,T_2)-\lambda_{11}(t) F_1(t,T_1,T_2)-\lambda_{12}(t) F_2(t,T_1,T_2))\, dt\\
&&+a_1 \sigma dW(t)+ \frac{\sigma_1(e^{-\mu T_1}-e^{-\mu T_2})}{\mu (T_2-T_1)} dW_1(t)+ \frac{\psi_1(e^{-\mu T_1}-e^{-\mu T_2})}{\mu (T_2-T_1)} dJ_1(t),\\
dF_2(t,T_1,T_2)&=&(C_2(t,T_1,T_2)-\lambda_{21}(t) F_1(t,T_1,T_2)-\lambda_{22}(t) F_2(t,T_1,T_2))\, dt\\
&&+a_2 \sigma dW(t)+\frac{\sigma_2(e^{-\mu T_1}-e^{-\mu T_2})}{\mu (T_2-T_1)} dW_2(t)+ \frac{\psi_2(e^{-\mu T_1}-e^{-\mu T_2})}{\mu (T_2-T_1)} dJ_2(t),
\end{eqnarray*}
where we made the choice $\hat w \equiv \frac{1}{T_2-T_1} I_2$. 

\begin{remark}
The structural Equation \eqref{80} for the mean-reversion coefficients $\lambda_{ij}$ allows to have much flexibility for modeling. For example, in such a setting, one could model the price of the first commodity with a Markovian dynamics  (i.e. $\lambda_{12}(t)\equiv0$) and preserve in the drift of commodity 2 the dependence on commodity 1. To cite an example of application, this can reproduce the behavior of oil (commodity 1) and gas (commodity 2) prices. 
%in European markets (cf. for instance \cite{Panagiotidis}, where the authors found evidence of cointegration between UK gas prices and the oil price before the opening of the Bacton–Zeebrugge gas Interconnector). TOO OLD???
%, since most of the gas arriving in Europe is a complementary output of oil extraction.  
\end{remark}

This model shows how to incorporate cointegration into an arbitrage-free, mean-reverting forward market with two commodities. It is interesting to observe that the no-arbitrage constraints imply a structural condition on the shape of the volatility term structure \eqref{107bis} and a linear relation among the mean-reversion coefficients  \eqref{80}. We furthermore remark that a similar approach allows to design more flexible market models, as well as accounting for more than two commodities, still preserving the mean-reverting and arbitrage-free traits. 

\section{Conclusions}

By adapting the Heath-Jarrow-Morton idea to energy forward curves in additive models, we have introduced an arbitrage-free framework capable of producing flexible and tractable market models, which exhibit mean-reversion in the dynamics of forward prices under the real-world probability measure.

Our main assumption on the functional form of the forward price processes has allowed to solve several issues. The fundamental requirement for no-arbitrage is the existence of an equivalent martingale measure. Generally, finding it is not a trivial task, since valid measure changes must be independent of the delivery parameters. Mean-reversion naturally requires affine Girsanov kernels, which are in particular stochastic and unbounded. We have been able to validate under minimal assumptions rather general density processes, with stochastic kernels for the Brownian components and Esscher-type factors for the pure-jump part. %In addition, in order to deal with the integral relation among swaps and forwards, we have introduced simple relations that completely characterize the dynamics. 
In addition, swaps and forwards have to satisfy an integral relation, which in general leads to losing analytical tractability, and in particular the Markov property. We succeed in preserving them by introducing simple relations among the coefficients of the dynamics. 

Passing to the applications, first we have shown that the well-known Lucia-Schwarz model was already included in the class of models that we here characterized. We have also presented an extension of it, capable to model seasonality in forward's volatilities, besides the price seasonality component already present in the original model. 

We then presented a multidimensional market model, which has enabled us in particular to reproduce cointegration effects. As the energy-related markets are strongly interconnected for physical reasons, the possibility of modeling dependence relations more sophisticated than correlation is particularly important for our application purposes. 

Looking ahead to future research, since the additive dynamics produce tractable price processes, we believe to have opened the way for analytical formulas for complex derivatives, multicommodity portfolios and risk measures. %Also, it enables to reproduce the negative spot prices that are being observed in power markets. 

\appendix\normalsize

\section{Proof of Theorem \ref{85}}

Part of this proof will be presented in a simpler version than the one developed in an earlier draft of this work, thanks to the useful suggestions of an anonymous referee. Introduce the sequence of stopping times $\tau_n:=\inf\{t\geq0: \|X(t) \|\geq n\}$. Set $f(y):=(1+y)\log(1+y)-y$ and define the process:
$$
B(t):=\frac12 \int_0^t \|\phi(s)\|^2 \, ds + \int_0^t f(\xi(s)^\intercal y) \,\nu(dy),
$$
which is the predictable compensator of 
$$
\frac12 \langle H^c,H^c \rangle + \sum_{t\leq\cdot} f(\Delta H(t)).
$$
In particular, observe that, for every $n\in\bN$, the stopped process $B^{\tau_n}(t):=B(t\wedge \tau_n)$ is bounded. From \cite[Theorem III.1]{MR489492} it follows that $Z^{\tau_n}$ is a uniformly integrable martingale.  Define the probability measure $\bQ^n$ by setting 
$$
\frac{d\bQ^n}{d\bP}:=Z^{\tau_n}(\bT).
$$
Then, we have
\begin{align*}
\bE[Z(\bT)]&\geq \bE[Z(\bT)\mathbbm{1}_{\tau_n >\bT}]=\bE[Z^{\tau_n}(\bT)\mathbbm{1}_{\tau_n >\bT}]=\bQ^n(\tau_n>\bT)\\
&=1-\bQ^n\left(\sup_{t\in[0,\bT]} \|X(t) \|\geq n\right)\geq 1-\frac{\bE_{\bQ^n}\left[\sup_{t\in[0,\bT]} \|X(t) \|\right]}{n}.
\end{align*}
If we show that 
\begin{equation}\label{sup}
\bE_{\bQ^n}\left[\sup_{t\in[0,\bT]} \|X(t) \|\right]\leq C,
\end{equation}
for a constant $C$ independent on $n$, then $\bE[Z(\bT)]\geq 1-C/n$ for all $n\in\bN$, which implies that $\bE[Z(\bT)]=1$. As a consequence, in order to conclude the proof it is sufficient to verify \eqref{sup}. 

By Girsanov's Theorem, it holds that 
\begin{align*}
dX(s)=\bigl(&\theta(s)+v(s)\phi_0(s) \mathbbm{1}_{[0,\tau_n]}(s)+z(s) K\,\xi(s)\mathbbm{1}_{[0,\tau_n]}(s)+\Theta(s)X(s)\\
&+v(s)\phi_1(s) X(s) \mathbbm{1}_{[0,\tau_n]}(s)\bigr)\,ds+ v(s)dW^n(s)+ z(s)\int_{\bR^k}y\overline N^n(ds,dy),
\end{align*}
where, under the measure $\bQ^n$, $W^n(s)=W(s)-\int_0^s \mathbbm{1}_{[0,\tau_n]}(u)\phi(u)\, du$ is a Brownian motion and $\overline N^n(ds,dy)=\overline N(ds,dy)-\mathbbm{1}_{[0,\tau_n]}(s)\,\xi(s)^\intercal y\,\nu(dy)ds$ is the $\bQ^n$-compensated Poisson random measure of $N$ (see \cite[Theorem 1.35]{MR2322248}). 
Hence, for each $t\in[0,\bT]$, (set w.l.o.g. $X(0)=0$)
\begin{align*}
\bE^{\bQ^n}&\left[\sup_{s\in[0,t]}\left\|X(s)\right\|^2\right]\\
&\leq 4\,\Biggl( \bE^{\bQ^n}\left[\sup_{s\in[0,t]} \left\| \int_0^s \bigl(\theta(u)+v(u)\phi_0(u) \mathbbm{1}_{[0,\tau_n]}(u)+z(u) K\,\xi(u)\mathbbm{1}_{[0,\tau_n]}(u)\bigr)\,du \right\|^2\right]\\
&+ \bE^{\bQ^n}\left[\sup_{s\in[0,t]} \left\|\int_0^s \bigl(\Theta(u)+\mathbbm{1}_{[0,\tau_n]}(u) v(u)\phi_1(u)\bigr) X(u)\,du \right\|^2\right]\\
&+\bE^{\bQ^n}\left[\sup_{s\in[0,t]} \left\|\int_0^s v(u)dW^n(u)\right\|^2\,\right] 
+ \bE^{\bQ^n}\left[\sup_{s\in[0,t]} \left\|\int_0^s \int_{\bR^k} z(u)\,y\,\overline N^n(du,dy)\right\|^2\,\right]\Biggr).
\end{align*}
%for a constant $C$ independent of $n$.
Firstly, by integrability assumptions on the coefficients, we have
\begin{align*}
&\bE^{\bQ^n}\left[\sup_{s\in[0,t]} \left\| \int_0^s \bigl(\theta(u)+v(u)\phi_0(u) \mathbbm{1}_{[0,\tau_n]}(u)+z(u) K\,\xi(u)\mathbbm{1}_{[0,\tau_n]}(u)\bigr)\,du \right\|^2\right]\\
&+\,\bE^{\bQ^n}\left[\sup_{s\in[0,t]} \left\|\int_0^s \bigl(\Theta(u)+\mathbbm{1}_{[0,\tau_n]}(u) v(u)\phi_1(u)\bigr) X(u)\,du \right\|^2\right]\\
&\leq \bT\,\bE^{\bQ^n}\left[ \int_0^\bT \left\|\theta(u)+v(u)\phi_0(u) \mathbbm{1}_{[0,\tau_n]}(u)+z(u) K\,\xi(u)\mathbbm{1}_{[0,\tau_n]}(u)\right\|^2\,du \right]\\
&+\bT\,\bE^{\bQ^n}\left[\int_0^t \left\|\Theta(u)+\mathbbm{1}_{[0,\tau_n]}(u) v(u)\phi_1(u)\right\|^2 \left\| X(u)\right\|^2\,du \right]\\
&\leq C_1 + C_2 \int_0^t \bE^{\bQ^n}\left[\sup_{v\in[0,u]}\left\|X(v)\right\|^2\right]\, du,
\end{align*}
for some constants $C_1,C_2$ independent of $n$. For the martingale part, 
%we estimate
%$$
%\bE^{\bQ^n}\left[\sup_{s\in[0,t]} \left\|\int_0^s v(u)dW^n(u)\right\|^2\,\right] 
%$$
%and
%$$
%\bE^{\bQ^n}\left[\sup_{s\in[0,t]} \left\|\int_0^s \int_{\bR^k} z(u)\,y\,\overline N^n(du,dy)\right\|^2\,\right]
%$$
we apply Doob's $\bL^2$-inequalities (e.g. \cite[Theorem 5.1.3]{MR3443368}). As $v$ is square-integrable, the first term yields
\begin{align*}
\bE^{\bQ^n}\left[\sup_{s\in[0,t]} \left\|\int_0^s v(u)dW^n(u)\right\|^2\,\right] 
\leq\bT\,\bE^{\bQ^n}\left[\int_0^\bT \|v(s)\|^2 ds\,\right]= \bT\,C_3.
\end{align*}
For the second term, since the Lévy measure $\nu$ admits fourth moment, $\xi$ is bounded and $z$ is square-integrable, we have that 
\begin{align*}
\bE^{\bQ^n}&\left[\sup_{s\in[0,t]} \left\|\int_0^s \int_{\bR^k} z(u)\,y\,\overline N^n(du,dy)\right\|^2\,\right]\\
&\leq \bT\,\bE^{\bQ^n}\left[\int_0^\bT \int_{\bR^k} \left\|z(s)\,y+\mathbbm{1}_{[0,\tau_n]}(s)\,z(s)\, y\,\xi(s)^\intercal y \right\|^2\,\nu(dy)ds\,\right]\\
&\leq \bT\,C_4\left(\int_{\bR^k} \|y\|^2 \nu(dy) +\int_{\bR^k} \|y\|^4 \nu(dy)\right),
\end{align*}
with $C_3,C_4$ independent of $n$.
Finally, all the bounds together give
$$
\bE^{\bQ^n}\left[\sup_{s\in[0,t]}\left\|X(s)\right\|^2\right]\leq C_5 + C_6 \int_0^t  \bE^{\bQ^n}\left[\sup_{v\in[0,u]}\left\|X(v)\right\|^2\right]\, du,
$$
again with constants independent of $n$. 
We conclude by Gronwall's lemma that
$$
\bE^{\bQ^n}\left[\sup_{t\in[0,\bT]}\left\|X(t)\right\|^2\right]\leq C_5 \,e^{C_6 \bT},
$$
which implies \eqref{sup}. This concludes the proof.

\section{Proof of Theorem \ref{7}}

The proof is based on the same idea of Proposition 5.1 in \cite{weather}, where a weak Novikov-type condition is applied to the series representation of the exponential. In view of \cite[Corollary 5.14]{KaS:91} it is sufficient to prove that there exists an increasing sequence of positive real numbers $(t_k)_{k\in\bN}$ diverging to $+\infty$ such that, for all $k\in\bN$,
\begin{equation}\label{36}
\bE\left[ \exp\left(\frac12 \int_{t_k}^{t_{k+1}} \|\phi(t)\|^2 dt \right)\right]<\infty,
\end{equation}
where
$$
\phi(t)=\phi_1(t)X(t)+\phi_0(t)
$$
and
\begin{equation}\label{63}
dX(t)=(\theta(t)+\Theta(t)X(t))dt+v(t)dW(t).
\end{equation}
%By assumption, the coefficients in \eqref{63} are regular enough for the existence and uniqueness of a strong solution (cf., for example, \cite[Section 5.2]{KaS:91}). 
Since $\Theta$ satisfies (CP) (see Definition \ref{17}), we can write the unique solution of \eqref{63} as
\begin{equation*}
\begin{split}
X(t)=e^{\int_0^t \Theta(s) ds} X(0)+\int_0^t e^{\int_s^t \Theta(u) du} \theta(s) ds
+\int_0^t e^{\int_s^t \Theta(u) du} v(s) dW(s).\\
%=U(t) X(0)+\int_0^t U(t)U(s)^{-1} \theta(s) ds
%+\int_0^t U(t)U(s)^{-1} v(s) dW(s),
\end{split}
\end{equation*}
%denoting with $U(t)$ the unique solution of the following system of ordinary differential equations: 
%$$
%dU(t)=\Theta(t) U(t)\, dt.
%$$
%Since , then $U(t)=e^{\int_0^t \Theta(s) d
%s}$ and the above equation reduces to
%\begin{equation*}
%\begin{split}
%X(t)=e^{\int_0^t \Theta(s) ds} X(0)+\int_0^t e^{\int_s^t \Theta(u) du} \theta(s) ds
%+\int_0^t e^{\int_s^t \Theta(u) du} v(s) dW(s).\\
%\end{split}
%\end{equation*}
%Let us consider the last expression for the sake of simplicity. 
Now, we start to estimate the quantity in \eqref{36} by
$$
\frac12 \|\phi(t)\|^2\leq \|\phi_1(t) X(t)\|^2+ \|\phi_0(t)\|^2.
$$
The second term is deterministic, so as to prove \eqref{36} we can neglect it. Hence, we move to consider the first one. Let us denote $U(t):=e^{\int_0^t \Theta(s) ds}$. Then, 
\begin{align*}
\|\phi_1(t) X(t)\|^2 &\leq 3\|\phi_1(t) U(t) X(0)\|^2 + 3\int_0^t \|\phi_1(t) U(t)U(s)^{-1} \theta(s)\|^2 ds\\
&+ 3 \left\|\int_0^t \phi_1(t) U(t)U(s)^{-1} v(s) dW(s)\right\|^2
\\
& \leq 3\|\phi_1(t) U(t)\|^2 \left( \|X(0)\|^2+\int_0^t \|U(s)^{-1} \theta(s)\|^2 ds
+\left\|\int_0^t U(s)^{-1} v(s) dW(s)\right\|^2 \right).
\end{align*}
As $\phi_1$ and $\Theta$ are bounded, we are essentially left with
\begin{equation*}
\bE\left[ \exp\left(\frac12 \int_{t_k}^{t_{k+1}} \|\phi(t)\|^2 dt \right)\right]
\leq
C_1 \bE\left[ \exp\left(C_2 
\int_{t_k}^{t_{k+1}} \left\|\int_0^t U(s)^{-1} v(s) dW(s)\right\|^2 dt \right)\right],
\end{equation*}
where $C_1,C_2$ are constants. 

Let us introduce the matrix-valued function $g(s):=U(s)^{-1} v(s)$. Now, by Lebesgue's theorem
\begin{equation}\label{71}
\bE\left[ \exp\left( C_2 \int_{t_k}^{t_{k+1}}  \left\|\int_0^t g(s) dW(s)\right\|^2 dt \right)\right]
=\sum_{n=0}^\infty \frac{C_2^n}{n!}\bE\left[\left(\int_{t_k}^{t_{k+1}}  \left\|\int_0^t g(s) dW(s)\right\|^2 dt \right)^n\right].
\end{equation}
Next, from Jensen's inequality,
$$
\bE\left[\left(\int_{t_k}^{t_{k+1}}  \left\|\int_0^t g(s) dW(s)\right\|^2 dt \right)^n\right]
\leq (t_{k+1}-t_k)^{n-1}\int_{t_k}^{t_{k+1}}  \bE\left[\left\|\int_0^t g(s) dW(s)\right\|^{2n} \right] dt.
$$
Notice that, since $\int_0^t \|g(s)\|^2 ds<\infty$, the random variable $\int_0^t g(s) dW(s)$  is equal in law to $\left(\sqrt{\int_0^t \|g(s)\|^2 ds}\right) Z$, where $Z$ is a standard normal vector. Therefore, since $g$ is square-integrable,
$$
\bE\left[\left\|\int_0^t g(s) dW(s)\right\|^{2n} \right]=
\left(\int_0^t \|g(s)\|^2 ds\right)^n \bE(\|Z\|^{2n})
\leq C_3^n\,\bE(\|Z\|^{2n}),
$$
for a constant $C_3$. 

Finally, exploiting the last two estimates, we find that \eqref{71} is bounded by
\begin{align*}
&\bE\left[ \exp\left( C_2 \int_{t_k}^{t_{k+1}}  \left\|\int_0^t g(s) dW(s)\right\|^2 dt \right)\right]
%=\sum_{n=0}^\infty \frac{K_2^n}{n!}\bE\left[\left(\int_{t_k}^{t_{k+1}}  \left\|\int_0^t g(s) dW(s)\right\|^2 dt \right)^n\right]
%\\
%&\leq\sum_{n=0}^\infty \frac{K_2^n}{n!}(t_{k+1}-t_k)^{n-1}\int_{t_k}^{t_{k+1}}  \bE\left[\left\|\int_0^t g(s) dW(s)\right\|^{2n} \right] dt
\leq \sum_{n=0}^\infty \frac{(C_2 C_3)^n (t_{k+1}-t_k)^{n}}{n!}\bE(\|Z\|^{2n}).
\end{align*}
A well-known property of Gaussian moments is the following
$$
\bE(\|Z\|^{2n})\leq C_4 n \bE(\|Z\|^{2(n-1)}).
$$
By applying the ratio test for series we get that, if we choose $(t_k)_k$ such that $C_2 C_3 C_4 (t_{k+1}-t_k)<1$, then
$$
\sum_{n=0}^\infty \frac{(C_2 C_3 C_4)^n (t_{k+1}-t_k)^n}{n!}\bE(\|Z\|^{2n})<\infty,
$$
which implies \eqref{36} and, therefore, the statement.

\bibliographystyle{amsplain}%plain, alpha, amsplain, amsalpha
\bibliography{bibbase}

\end{document}